\documentclass[a4paper,10pt]{article}
\usepackage{graphicx}
\DeclareGraphicsRule{.pstex}{eps}{*}{}
\usepackage{color}\usepackage{amsmath, amsthm, slashed, enumerate}
\usepackage{amssymb}
\usepackage{rotating}
\usepackage{epsfig}
\usepackage{verbatim}
\usepackage{rotating}
\usepackage{graphicx}
\newtheorem{definition}{Definition}[section]
\newtheorem{theorem}{Theorem}[section]

\newtheorem{corollary}{Corollary}[section]
\newtheorem{proposition}{Proposition}[section]
\newtheorem{lemma}{Lemma}[section]
\newtheorem{remark}{Remark}[section]

\newtheorem{example}{Example}[section]

\hyphenation{Schwarz-schild}
%opening
\title{Well-posedness for the massive wave equation on asymptotically anti-de Sitter spacetimes}
\author{Gustav Holzegel\thanks{Princeton University,
Department of Mathematics, Fine Hall, Washington Road,
Princeton, NJ 08544 United States}}

\begin{document}
\maketitle
\begin{abstract}
{\bfseries Abstract.}\quad In this paper, we prove a well-posedness theorem for the massive wave equation (with the mass satisfying the Breitenlohner-Freedman bound) on asymptotically anti-de Sitter spaces. The solution is constructed as a limit of solutions to an initial boundary value problem with boundary at a finite location in spacetime by finally pushing the boundary out to infinity. The solution obtained is unique within the energy class (but non-unique if the decay at infinity is weakened).
\end{abstract}

%\keywords{Keyword1; keyword2; keyword3.}

\section{Introduction}
In \cite{HolzegelAdS}, the author obtained uniform global bounds for a class of solutions to the massive wave equation
\begin{equation} \label{mwe}
\Box_g \psi - \alpha \frac{\Lambda}{3} \psi = 0 \, 
\end{equation}
on any black hole spacetime suitably close to a slowly rotating Kerr-AdS solution provided the spacetime admitted a Killing field on the black hole exterior which is everywhere causal and null on the horizon. The more elementary issue of well-posedness of (\ref{mwe}) on these backgrounds was not discussed in the aforementioned paper. Instead, we assumed the existence of solutions to (\ref{mwe}) arising from suitably regular initial data and exhibiting a sufficiently strong decay at null-infinity for our argument to work. The main purpose of this paper is to prove the existence of such solutions, indeed, the well-posedness of a suitable initial value problem. 

We direct the reader to section 8.4 of \cite{Mihalisnotes} and to the previous paper \cite{HolzegelAdS} for further background. We also refer to the recent work of Vasy \cite{Vasy2} for related local and global well-posedness results on asymptotically AdS spacetimes (including a detailed description of the propagation of singularities)  using in particular tools from microlocal analysis.

The reason that even the well-posedness question is non-trivial arises from the fact that asymptotically AdS spacetimes are not globally hyperbolic. To make hyperbolic equations like (\ref{mwe}) well-posed on such backgrounds, in general, boundary conditions will have to be imposed on the timelike boundary at infinity, commonly referred to as ``scri" and denoted $\mathcal{I}$. It turns out that existence and uniqueness is particularly subtle in the range of ``negative mass" $0<\alpha<\frac{9}{4}$, as was first observed in the pioneering work of Breitenlohner and Freedman \cite{Breitenlohner} by a mode analysis on exactly AdS spacetimes. For this case, Bachelot \cite{Bachelot} finally proved that, in the range $\frac{5}{4}\leq \alpha<\frac{9}{4}$, infinitely many solutions of (\ref{mwe}) exist, depending on boundary conditions, while for $\alpha < \frac{5}{4}$, the problem is well-posed, provided the spherically symmetric mode is subtracted. Since this rather surprising behavior is a consequence of the timelike boundary at infinity, one may expect to establish a well-posedness statement not only for pure AdS but for all asymptotically AdS spacetimes. This is carried out in the present paper. 

While the work of \cite{Bachelot} studies the problem in the realm of scattering theory and self-adjoint extensions of operators, the approach taken here is entirely based on energy estimates.
In this context, we prove a well-posedness statement for the complete range $\alpha<\frac{9}{4}$, with the solution being unique in the energy class. Key is the observation that the non-uniqueness revealed in \cite{Bachelot} and \cite{Breitenlohner} stems from solutions which do not satisfy the energy estimate, or more precisely, solutions whose $\partial_t$-energy-flux through $\mathcal{I}$ is infinite. In other words, the solution we construct for the given initial boundary value problem is unique in a class of solutions which decay sufficiently rapidly at infinity.

The actual construction of the solutions is carried out in what we will call \emph{an asymptotically AdS patch} near infinity. This is sufficient because the well-posedness statement for the boundary initial value problem in the patch can then be combined with elementary global considerations involving domain of dependence arguments (cf.~section \ref{gloco}) to establish a spatially global well-posedness statement for a suitable class of spacetimes.

With such an asymptotically AdS patch being fixed, note that the boundary initial value problem with the boundary located somewhere \emph{inside} the spacetime patch can be solved using standard techniques, as the weights in $r$ ($r$ being a radial coordinate) remain bounded everywhere. With this in mind, we construct a sequence of timelike boundaries $\mathcal{B}_i$ which approach the timelike hypersurface $\mathcal{I}$ in the limit $i \rightarrow \infty$ and consider the sequence of solutions $\mathcal{S}_i$ associated with each initial boundary value problem. Now the crucial observation is that actually, stronger $r$-weighted norms than the ones arising from the energy are propagated by the equation. In this way, we establish improved uniform (in $i$) $r$-weighted estimates for each solution $\mathcal{S}_i$. Finally, we compare two such solutions in the region where they are both defined and establish convergence in the energy norm using the improved uniform estimates mentioned above.

Here is an outline of the paper. In section \ref{aAdS} we define the notion of an asymptotically AdS patch near infinity (which, in particular, imposes suitable decay assumptions on the metric) which provides the arena in which we are going to prove the well-posedness statement. Section \ref{norms} defines various $r$-weighted Sobolev norms that we are going to use. Certain elliptic estimates on spacelike slices, which will be used later in the argument, are derived in section \ref{elliptic}.  After defining the class of initial data in section \ref{idata}, the main theorem is stated in section \ref{theosec}. Its proof is reduced to a key proposition, Proposition \ref{mnthm}, which we prove in section \ref{theproof} using the limiting procedure outlined above. In the final section \ref{finsec}, we globalize our well-posedness result in the asymptotically AdS patch to a spatially global statement for a suitable class of spacetimes (cf.~section \ref{gloco}). As an afterthought we also discuss the regularity of the metric required in the case of spherical symmetry. This will be important for the applications of this paper to the non-linear setting of \cite{gs:lwp}.

\section{Asymptotically AdS spacetimes} \label{aAdS}
From the extensive literature on asymptotically AdS spacetimes (cf.~for instance \cite{Henneaux, Hollands2}), we are going to distill the following definition, which is most useful for our purposes. 
It is to be thought of as defining a patch of spacetime near the timelike boundary at infinity.
\begin{definition}
We call the Lorentzian manifold $\left(\mathcal{\tilde{D}},g\right)$ an asymptotically AdS spacetime patch with cosmological constant $\Lambda = -\frac{3}{l^2}$ if the following holds:
\begin{enumerate}[(I)]
\item $\mathcal{\tilde{D}}$ is topologically $\left[0,T\right] \times [\tilde{R},\infty) \times S^2$. It is covered by a coordinate system $(t,r,x_i,y_i)$ ($i=1,2$ being two coordinate patches covering the spheres) such that hypersurfaces of constant $t$ foliate $\mathcal{\tilde{D}}$ and are spacelike. We denote them by $\tilde{\Sigma}_t$. Moreover, any surface $\tilde{\Sigma}_t$ is itself foliated by $2$-spheres $S^2_{t,r}$. The boundary of the region $\mathcal{\tilde{D}}$ is given by the slice $\tilde{\Sigma}_0$ (in the past), the slice $\tilde{\Sigma}_{T}$ (in the future) and the timelike hypersurface $\tilde{\mathcal{B}}_0$, which is generated by the integral curves of the vectorfield $\partial_t$ emanating from the sphere $S^2_{0,\tilde{R}}$. Finally, all hypersurfaces of constant $r$ are timelike in $\mathcal{\tilde{D}}$.

\item In the coordinates $(t,r,x_i,y_i)$, the metric $g$, which is assumed to be smooth, has the following asymptotic behavior ($A,B \in \{x_i,y_i\}$):
\begin{align}
g_{tt} = - \frac{r^2}{l^2} - 1 + \mathcal{O}\left(\frac{1}{r}\right) \textrm{ \ \ \ \ , \ \ \ \ } g_{rr} = \frac{l^2}{r^2} -\frac{l^4}{r^4} + \mathcal{O}\left(\frac{1}{r^5}\right) \nonumber \\
g_{AB} = r^2 \left[\slashed{g}_{S^2}\right]_{AB} + \mathcal{O}\left(\frac{1}{r}\right) \phantom{XXXXXX} \nonumber \\
g_{tr} = \mathcal{O}\left(\frac{1}{r^3}\right)  \textrm{ \ \ \ , \ \ \ }
g_{tA} = \mathcal{O}\left(\frac{1}{r}\right)  \textrm{ \ \ \ , \ \ \ } g_{rA} = \mathcal{O}\left(\frac{1}{r^4}\right) 
\end{align}
where $\slashed{g}_{S^2}$ denotes the standard metric on the round 2-sphere. For the inverse
\begin{align}
g^{tt} = \frac{l^2}{r^2} - \frac{l^4}{r^4} +  \mathcal{O}\left(\frac{1}{r^5}\right) \textrm{ \ \ \ \ , \ \ \ \ } g^{rr} = \frac{r^2}{l^2} + 1 + \mathcal{O}\left(\frac{1}{r}\right) \nonumber \\
g^{AB} = \frac{1}{r^2} \left[\slashed{g}_{S^2}\right]^{AB} + \mathcal{O}\left(\frac{1}{r^5}\right) \phantom{XXXXXX} \nonumber \\
g^{tr} =  \mathcal{O}\left(\frac{1}{r^3}\right) \textrm{ \ \ \ , \ \ \ }
g^{tA} =   \mathcal{O}\left(\frac{1}{r^5}\right) \textrm{ \ \ \ , \ \ \ }
g^{rA} =  \mathcal{O}\left(\frac{1}{r^4}\right) \nonumber \, .
\end{align}
We assume similar decay for derivatives of the metric functions with each $r$-derivative lowering the powers of $r$ by one. For instance,
\begin{equation}
\partial_r g_{tt} = -2\frac{r}{l^2} + \mathcal{O}\left(\frac{1}{r^2}\right) \textrm{ \ \ \ and \ \ \ } \partial_t g_{tt} = \mathcal{O}\left(\frac{1}{r}\right) \, .
\end{equation}
\end{enumerate}
\end{definition}
\begin{remark} \label{expl}
Note that the Kerr-AdS metric in the standard Boyer-Lindquist coordinates does not obey the fall-off imposed by (II). However, one can do a coordinate transformation \cite{Henneaux} leading to a system which admits the decay stated above. For the purpose of the existence proof below, one can in fact work with much weaker decay but we will not spell out the minimal assumptions here.
\end{remark}
\begin{remark}
With these assumptions on the metric one shows that there exist future complete outgoing null-geodesics in $\mathcal{\tilde{D}}$ which emanate from $\tilde{\Sigma}_0$. The limit endpoints of such geodesics are not in $\mathcal{\tilde{D}}$ but they can, at least formally, be parametrized as $\left(t,r=\infty,x_i,y_i\right)$, an ideal timelike boundary commonly referred to as $\mathcal{I}$. 
\end{remark}

For future reference, we also collect the estimates for the spactime volume form and the volume form induced on the slices $\tilde{\Sigma}_t$. This can be computed from (II) above. For convenience, we choose the coordinates $x_i, y_i$ on the $S^2_{t,r}$ such that $r^2 \sqrt{\slashed{g}_{S^2}} =r^2$ holds if $S^2_{t,r}$ is equipped with the round metric $r^2 \slashed{g}_{S^2}$. With this choice we have
\begin{equation}
\Big|\frac{\sqrt{g}}{r^2} - 1 \Big| \leq \frac{C}{r^3} \textrm{ \ \ \ \ and \ \ \ \ }
\Big| \frac{\sqrt{g_{\tilde{\Sigma}_t}}}{l r} -1 \Big| \leq \frac{C}{r^2} 
\end{equation}
respectively (while in general, ``$1$" is to be replaced by $\sqrt{\slashed{g}_{S^2}}$). Here and in the following, $C$ denotes a uniform constant depending only on the background $(\mathcal{\tilde{D}},g)$, which we will from now on regard as being fixed.

\subsection{The boundaries $\mathcal{B}_i$} \label{bndsec}
Let $r_i=2^i \cdot R$ for $i$ a non-negative integer and $R\geq \tilde{R}$ a large constant. 
Let us define the restrictions
\begin{align} \label{reR}
\mathcal{D} = {\mathcal{\tilde{D}}} \cap \{ r\geq R \} \ \ \ \ \ \ , \ \ \ \ \ \Sigma_t = \tilde{\Sigma}_t \cap \{ r \geq R \} \, .
\end{align}
The domain $\mathcal{D}$ (for a suitable choice of $R$, which will me made below and depend only on the fixed asymptotically AdS spacetime patch $(\mathcal{\tilde{D}},g)$) is the domain in which we are going to prove a well-posedness statement below.

Consider next the integral curves of the vectorfield $\partial_t$ emanating from $\Sigma_0 \cap \{r=r_i\}$: They generate a three-dimensional timelike hypersurface of constant $r=r_i$, which we will call $\mathcal{B}_{i}$. As $r_i \rightarrow \infty$, the ideal boundary $\mathcal{I}$ is approached. 
\[
\begin{picture}(0,0)%
\includegraphics{asympAdS3.pstex}%
\end{picture}%
\setlength{\unitlength}{2210sp}%
\begingroup\makeatletter\ifx\SetFigFont\undefined%
\gdef\SetFigFont#1#2#3#4#5{%
  \reset@font\fontsize{#1}{#2pt}%
  \fontfamily{#3}\fontseries{#4}\fontshape{#5}%
  \selectfont}%
\fi\endgroup%
\begin{picture}(3375,2863)(1276,-5546)
\put(2551,-5461){\makebox(0,0)[lb]{\smash{{\SetFigFont{11}{13.2}{\rmdefault}{\mddefault}{\updefault}{\color[rgb]{0,0,0}$\Sigma_0$}%
}}}}
\put(2101,-4711){\makebox(0,0)[lb]{\smash{{\SetFigFont{11}{13.2}{\rmdefault}{\mddefault}{\updefault}{\color[rgb]{0,0,0}$\Sigma_{t}$}%
}}}}
\put(3901,-2911){\makebox(0,0)[lb]{\smash{{\SetFigFont{11}{13.2}{\rmdefault}{\mddefault}{\updefault}{\color[rgb]{0,0,0}$\mathcal{B}_{i+1}$}%
}}}}
\put(3451,-3361){\makebox(0,0)[lb]{\smash{{\SetFigFont{11}{13.2}{\rmdefault}{\mddefault}{\updefault}{\color[rgb]{0,0,0}$\mathcal{B}_i$}%
}}}}
\put(4651,-3661){\makebox(0,0)[lb]{\smash{{\SetFigFont{11}{13.2}{\rmdefault}{\mddefault}{\updefault}{\color[rgb]{0,0,0}$\mathcal{I}$}%
}}}}
\put(1276,-2986){\makebox(0,0)[lb]{\smash{{\SetFigFont{11}{13.2}{\rmdefault}{\mddefault}{\updefault}{\color[rgb]{0,0,0}$\mathcal{B}_{0}$}%
}}}}
\put(1876,-3661){\makebox(0,0)[lb]{\smash{{\SetFigFont{11}{13.2}{\rmdefault}{\mddefault}{\updefault}{\color[rgb]{0,0,0}$\Sigma_{T}$}%
}}}}
\end{picture}%

\]
A computation reveals that 
\begin{itemize}
\item the spacelike hypersurfaces $\Sigma_t$ have normal
\begin{equation}
n_{\Sigma_t} = \left(\frac{l}{r} + a\left(t,r,x,y\right) \right)\partial_t + b\left(t,r,x,y\right) \partial_r + c^A\left(t,r,x,y\right) \partial_A
\end{equation}
with the estimate $r^3 |a| + r^2 |b| + r^4|c^A| \leq C$.
\item the timelike hypersurfaces $\mathcal{B}_i$ have normal
\begin{equation}
n_{\mathcal{B}_i} = \left(\frac{r}{l} + \tilde{b} \left(t,r,x,y\right) \right)\partial_r + \tilde{a}\left(t,r,x,y\right) \partial_t  + \tilde{c}^A\left(t,r,x,y\right) \partial_A
\end{equation}
with the estimate $r^4 |\tilde{a}| + r|\tilde{b}| + r^5|\tilde{c}^A| \leq C$ . 
\end{itemize}

We denote by $\mathcal{D}_i$ the subset of $\mathcal{D}$ that lies to the the inside of the cylinder $\mathcal{B}_i$, i.e~$\mathcal{D}_i = \mathcal{D} \cap \{ r < r_i\}$. Finally, $\Sigma_t^i = \Sigma_t \cap \mathcal{D}_i$.

\subsection{Approximate Killing fields near infinity} \label{approK}
We have the following formula for the deformation tensor of a vectorfield $X$:
\begin{equation}
 2\phantom{}^{(X)}\pi^{ab} = g^{ac} \partial_c X^b + g^{bd} \partial_d X^a + g^{ac} g^{bd} g_{cd,f} X^f \, .
\end{equation}
Using the decay of the metric components near infinity one deduces
that the vectorfield $\partial_t$ is approximately Killing near infinity, in the sense that its
deformation tensor satisfies
\begin{equation}
|\phantom{}^{(\partial_t)}\pi^{tt}|r^5 + |\phantom{}^{(\partial_t)}\pi^{tr}|r^3 + |\phantom{}^{(\partial_t)}\pi^{tA}|r^5 + |\phantom{}^{(\partial_t)}\pi^{rr}|r + |\phantom{}^{(\partial_t)}\pi^{rA}|r^4 + |\phantom{}^{(\partial_t)}\pi^{AB}|r^5 \leq C \, .
\end{equation}

\section{The norms} \label{norms}
For any real number $s$ and $n=0,1,2$ we define the weighted Sobolev spaces 
\begin{align}
H_{AdS}^{n,s}\left(\Sigma\right) := \Big\{ \psi \in H^n_{loc} \left(\Sigma\right) \ \ , \ \  \|\psi\|^2_{H_{AdS}^{n,s}\left(\Sigma\right)}  < \infty \Big\}
\end{align}
with the norms defined as follows:
\begin{equation}
\|\psi\|^2_{H_{AdS}^{0,s}\left(\Sigma\right)} = \int_{\Sigma} r^s \psi^2 r^{2} dr \, d\omega \, ,
\end{equation}
\begin{equation}
\|\psi\|^2_{H^{1,s}_{AdS}\left(\Sigma\right)} = \int_{\Sigma} r^s \left[r^2 \left(\partial_r \psi\right)^2 + |\slashed{\nabla}\psi|^2 + \psi^2 \right] r^{2} dr  \, d\omega \, ,
\end{equation}

\begin{eqnarray}
\|\psi\|^2_{H_{AdS}^{2,s}\left(\Sigma\right)} = \int_{\Sigma} r^s \Bigg[r^4 \left(\partial_r \partial_r \psi\right)^2 + r^2 | \slashed{\nabla}\partial_ r\psi|^2 + |\slashed{\nabla} \slashed{\nabla}\psi|^2 \nonumber \\ + r^2 \left(\partial_r \psi\right)^2 + |\slashed{\nabla}\psi|^2 + \psi^2\Bigg] r^{2} dr  \, d\omega \, .
\end{eqnarray}
where $d\omega=dx dy$, and $ |\slashed{\nabla}\psi|^2$ denotes the norm induced on the $S^2_{t,r}$ with $\slashed{\nabla}$ being the gradient of the induced metric. Note that in view of the closeness assumptions on the metric, the angular norm using the actual induced metric is equivalent to the one using the round metric.
\begin{remark}
The reader should note that, at least for $s=0$, these norms have their origin in the standard energy estimate arising from the approximate Killing field $\partial_t$, cf.~(\ref{emt}) and (\ref{emtid}). For instance, one easily sees
\begin{equation}
 \int_{\Sigma_t}  \mathbb{T}\left[\psi\right] \left(\partial_t, n_{\Sigma_t}\right) \sqrt{g_{\Sigma_t}} dr \, d\omega \leq C  \left[ \| \psi \|_{H^{1,0}_{AdS}\left(\Sigma\right)}  + \| \partial_t \psi \|_{H_{AdS}^{0,-2}\left(\Sigma\right)}\right]  \, .
\end{equation}
The additional $r$-weight is motivated by certain weighted elliptic estimates to be proven later.
\end{remark}
\section{Some elliptic estimates} \label{elliptic}
An important element of the proof will be (weighted) elliptic estimates derived from the elliptic part of the wave operator on the spacelike slices $\Sigma_\tau$. Such estimates will allow us to essentially estimate \emph{all} derivatives from estimates on the time derivatives. The difficulty here lies in the understanding of the admissible $r$-weights.

\begin{lemma} \label{dens}
The space of smooth functions with compact support in $\Sigma$ is dense in $H_{AdS}^{m,s}\left(\Sigma\right)$.
\end{lemma}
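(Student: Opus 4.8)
The plan is to prove density by the standard two-step approximation scheme: first approximate an arbitrary element of $H_{AdS}^{m,s}(\Sigma)$ by compactly supported (in $r$) functions via a cutoff, and then smooth out by mollification. The only subtlety compared to the flat case is keeping track of the $r$-weights, since the domain $\Sigma = \tilde\Sigma_t \cap \{r \geq R\}$ has a non-compact direction $r \to \infty$, and the boundary at $r=R$ must also be handled. I will treat the two ends separately.

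First I would dispose of the noncompactness at infinity. Given $\psi \in H_{AdS}^{m,s}(\Sigma)$, introduce a smooth radial cutoff $\chi_k(r)$ which equals $1$ for $r \leq k$, vanishes for $r \geq 2k$, and satisfies $|\partial_r \chi_k| \leq C/k$ and $|\partial_r^2 \chi_k| \leq C/k^2$. The key point is that in the norm $\|\cdot\|_{H_{AdS}^{m,s}}$ every factor of $\partial_r$ is accompanied by a weight $r^2 (\partial_r)^2$ (and $r^4(\partial_r^2)^2$ at second order), so differentiating the cutoff produces terms like $r^2 (\partial_r \chi_k)^2 \psi^2$, which on the support of $\partial_r \chi_k$ (where $k \leq r \leq 2k$) is bounded by $C r^{-2} \cdot r^2 \psi^2 \cdot \text{(no bad power)}$; more precisely $r^2|\partial_r\chi_k|^2 \leq C r^2/k^2 \leq C$ on that annulus, so these commutator terms are controlled by the tail $\int_{r \geq k} r^s(\cdots)r^2\,dr\,d\omega$ of the (finite) norm of $\psi$, which tends to $0$ as $k \to \infty$ by dominated convergence. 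The second-derivative commutators work the same way because the weight $r^4$ exactly compensates $|\partial_r^2\chi_k|^2 \leq C/k^4$. Hence $\chi_k \psi \to \psi$ in $H_{AdS}^{m,s}$, reducing matters to $\psi$ supported in $\{R \leq r \leq 2k\}$.

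Next I would mollify such a compactly-$r$-supported function to gain smoothness. On a region $\{R \leq r \leq 2k\}$ the weights $r^s$ and the powers of $r$ multiplying the derivatives are all bounded above and below by positive constants, so the weighted norm $\|\cdot\|_{H_{AdS}^{m,s}}$ is \emph{equivalent} to a standard (unweighted) $H^m$-Sobolev norm on that region. Standard mollification (in the coordinates $(r,x,y)$, using the two coordinate patches on $S^2$ together with a partition of unity) then approximates $\psi$ in $H^m$, hence in $H_{AdS}^{m,s}$, by smooth functions. To arrange compact support away from the boundary $r=R$, I first translate the support slightly inward — i.e.\ precompose with a shift in $r$ so the support moves into $\{r > R\}$, which converges to $\psi$ in $H^m$ since translation is continuous on $L^2$-based Sobolev spaces — and then mollify at a scale smaller than the gap to $r=R$. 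The result is smooth and compactly supported in the interior of $\Sigma$.

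The main obstacle, and the place demanding care rather than routine work, is the interplay between the translation/mollification and the $r$-weighted structure near the inner boundary $r=R$: one must verify that shifting the support inward does not increase the norm uncontrollably and that the equivalence with the flat $H^m$ norm is uniform on the relevant compact $r$-range. Since $R$ is a fixed finite constant and all weights are smooth and bounded there, this is genuinely harmless, but it is the step where the argument differs from a purely formal citation of the standard density theorem. Combining the cutoff at infinity with the interior mollification and a diagonal argument yields the desired smooth compactly supported approximants, proving the lemma.
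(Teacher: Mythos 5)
Your first step --- cutting off at infinity with a radial cutoff whose derivatives scale like $|\partial_r^j\chi_k|\leq C k^{-j}$, so that the weights $r^2(\partial_r\cdot)^2$ and $r^4(\partial_r^2\cdot)^2$ render the commutator terms bounded by the tail $\|\psi\|_{H^{m,s}_{AdS}(\Sigma\cap\{r\geq k\})}\to 0$ --- is exactly the paper's argument (the paper uses dyadic cutoffs with $|\partial_r^m\chi_i|\leq C_m r^{-m}$, which is the same estimate on the annulus $r\sim k$). The subsequent mollification on the compact $r$-range, where the weighted norm is equivalent to the flat $H^m$ norm, is also what the paper does (and leaves as ``standard'').

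The problem is your final step, where you translate the support inward so that the approximants vanish near $r=R$. This step is both unnecessary and incorrect. It is unnecessary because ``compact support in $\Sigma$'' here means support that is a compact subset of the manifold-with-boundary $\Sigma=\tilde\Sigma_t\cap\{r\geq R\}$; the approximants are allowed to be nonzero at $r=R$, and the paper relies on exactly this reading: in the proof of Lemma \ref{fila} the boundary term at $r=R$ is \emph{not} assumed to vanish but is kept and observed to have a good sign. It is incorrect because for $m\geq 1$ the stronger statement you are trying to prove is false: the trace map $\psi\mapsto\psi|_{r=R}$ is continuous on $H^m$ near the (finite, unweighted) boundary $r=R$, so any $H^m$-limit of functions vanishing near $r=R$ has vanishing trace there --- i.e.\ the closure of $C_c^\infty(\mathrm{int}\,\Sigma)$ is a proper subspace (an $H^m_0$-type space), not all of $H^{m,s}_{AdS}(\Sigma)$. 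Concretely, your translation $\psi(\cdot-\delta)$ either requires extending $\psi$ by zero below $r=R$ (which leaves $H^1$ if $\psi(R)\neq 0$) or by a Sobolev extension (in which case the translate does not vanish near $r=R$ after all). Dropping this step entirely, and keeping only the cutoff at infinity followed by mollification up to the boundary, gives the correct proof.
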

\begin{proof}
Let $\phi \in H^{m,s}_{AdS}\left(\Sigma\right)$ and $\epsilon>0$ be given. Recall that $r_i = 2^i R$ and let $\chi_i\left(r\right)$ be a smooth bounded non-negative function, which is equal to $1$ for $r \leq 2^i R$ and equal to zero on $[2^{i+1}R,\infty)$. It is easy to see that $\chi_n$ can be chosen such that $|\partial^m_r \chi_i| \leq \frac{C_m}{r^m}$ for any non-negative integer $m$: Simply smooth the piecewise linear function
\begin{equation*} 
\chi^{lin}_{i} \left(r\right) =
\begin{cases} 1 & \text{if } r < 2^i R,\\ 2-\frac{r}{2^iR} & \text{if } 2^iR \leq r \leq 2^{i+1}R ,\\ 0 & \text{if } r > 2^{i+1}R.
\end{cases} \end{equation*}

 Define $\tilde{\phi}_i = \chi_i\left(r\right) \cdot \phi $. In view of the condition on the derivatives of $\chi_i$, $\tilde{\phi}_i \in H^{m,s}_{AdS}\left(\Sigma\right)$ and moreover, $\tilde{\phi}_i$ is of compact support. In addition,
\begin{align}
\| \tilde{\phi}_i - \phi\|_{H_{AdS}^{m,s}\left(\Sigma\right)}  = \| \tilde{\phi}_i - \phi \|_{H_{AdS}^{m,s}\left(\Sigma \cap \{2^i R < r < 2^{i+1}R \}\right)} + \| \phi \|_{H_{AdS}^{m,s}\left(\Sigma \cap \{r \geq 2^{i+1}R \}\right)}\nonumber \\  \leq   \| \tilde{\phi}_i  \|_{H_{AdS}^{m,s}\left(\Sigma \cap \{2^i R < r < 2^{i+1}R \}\right)} + \| \phi \|_{H_{AdS}^{m,s}\left(\Sigma \cap \{r \geq 2^iR  \}\right)} \nonumber \\
\leq C_m  \| \phi \|_{H_{AdS}^{m,s}\left(\Sigma \cap \{r \geq 2^iR  \}\right)} < \frac{\epsilon}{2} \nonumber
\end{align}
for $i$ sufficiently large.\footnote{The series $\sum_{i=0}^\infty \| {\phi}  \|^2_{H_{AdS}^{m,s}\left(\Sigma \cap \{2^i R < r < 2^{i+1}R \}\right)}$ converges, since ${\phi} \in H_{AdS}^{m,s}\left(\Sigma\right)$. This implies that the sequence of partial sums $\sum_{i=0}^N \| {\phi}  \|^2_{H_{AdS}^{m,s}\left(\Sigma \cap \{2^i R < r < 2^{i+1}R \}\right)}$ is Cauchy. Thus for any $\epsilon>0$ there exists an $N$, such that $\sum_{i=N}^{N+k} \| {\phi}  \|^2_{H_{AdS}^{m,s}\left(\Sigma \cap \{2^i R < r < 2^{i+1}R \}\right)}<\epsilon$ for all $k$. \label{ftnc}}
This shows that $\phi \in H^{m,s}_{AdS}\left(\Sigma\right)$ can be approximated by functions of compact support in $H^{m,s}_{AdS}\left(\Sigma\right)$. By a standard mollifying argument we can smooth $\tilde{\phi}$ and hence approximate within $\epsilon$ in the space of smooth functions of compact support.
\end{proof}

Recall that $C$ is a uniform constant depending only on the fixed 
asymptotically AdS spacetime.
\begin{lemma} \label{fila}
Let $s\geq 0$. For any $\psi \in H^{1,s}_{AdS}\left(\Sigma_{\tau}\right)$ we have the Hardy inequality
\begin{align} \label{filaeq}
\frac{1}{l^2} \int_{\Sigma_\tau} \psi^2 r^s \ \sqrt{g} dr d\omega + c \int_{\Sigma_{\tau}} r^{2+s} \left(\partial_r \psi\right)^2 \ dr d\omega \nonumber \\
 \leq \frac{4}{\left(3+s\right)^2} \int_{\Sigma_\tau} \sqrt{g} g^{rr} \left(\partial_r\psi\right)^2 r^s dr d\omega \, 
\end{align}
provided that $R$ (defining $\Sigma_\tau$ in (\ref{reR})) is chosen sufficiently large depending only on the fixed asymptotically AdS spacetime. Here $c>0$ is a uniform constant.
\end{lemma}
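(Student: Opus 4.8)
The plan is to reduce the whole statement to a one‑dimensional weighted Hardy inequality in the variable $r$ at each fixed angle, and then to integrate over the sphere. By Lemma \ref{dens} it suffices to prove (\ref{filaeq}) for $\psi$ smooth and of compact support in $\Sigma_\tau$ (so $\psi$ may be nonzero on the inner sphere $\{r=R\}$ but vanishes for large $r$); the general case then follows because each of the three integrals in (\ref{filaeq}) is a quadratic form bounded by $\|\psi\|_{H^{1,s}_{AdS}}^2$ (using $\sqrt g\,g^{rr}r^s\lesssim r^{4+s}$, $\sqrt g\,r^s\lesssim r^{2+s}$ and $r^{2+s}\le R^{-2}r^{4+s}$), hence continuous in the norm. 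For such $\psi$ and fixed $(x,y)$ the claim is a one–dimensional inequality on $[R,\infty)$ with coefficients $\sqrt g(r,\cdot)$ and $g^{rr}(r,\cdot)$; since the constants in the expansions of (II) are uniform in the angular variables, it is enough to prove the one–dimensional statement with a single $c$ and threshold $R$ and then integrate in $d\omega$.

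First I would insert the metric asymptotics to read off the three weights. From $|\sqrt g/r^2-1|\le C/r^3$ and $g^{rr}=\tfrac{r^2}{l^2}+1+\mathcal{O}(1/r)$ one obtains
\[
\tfrac{1}{l^2}\sqrt g\,r^s=\tfrac{1}{l^2}r^{2+s}+E_1,\qquad \sqrt g\,g^{rr}\,r^s=\tfrac{1}{l^2}r^{4+s}+r^{2+s}+E_2,
\]
with $|E_1|\le C r^{s-1}$ and $|E_2|\le C r^{1+s}$. The decisive structural feature is that the subleading coefficient of the right–hand weight is $+r^{2+s}$, with a \emph{definite positive sign} (it arises from $\sqrt g\approx r^2$ multiplied by the $+1$ in $g^{rr}$); this is exactly the surplus that will generate the extra term $c\int r^{2+s}(\partial_r\psi)^2$.

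Next I would establish the sharp one–dimensional Hardy inequality. For any $\alpha>1$, completing the square with $\gamma=-(\alpha-1)/2$ and integrating by parts from $R$ to $\infty$ (no boundary contribution at infinity, by compact support) gives
\[
\int_R^\infty (\partial_r\psi)^2 r^{\alpha}\,dr=\int_R^\infty\big(\partial_r\psi+\tfrac{\alpha-1}{2}\tfrac{\psi}{r}\big)^2 r^{\alpha}\,dr+\tfrac{\alpha-1}{2}\psi(R)^2R^{\alpha-1}+\tfrac{(\alpha-1)^2}{4}\int_R^\infty \psi^2 r^{\alpha-2}\,dr,
\]
where the choice of $\gamma$ optimises the constant and makes the boundary term at $r=R$ nonnegative. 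Applied with $\alpha=4+s$ (so $\alpha-1=3+s$, $\alpha-2=2+s$, using $s\ge 0$) this yields precisely the constant $\tfrac{4}{(3+s)^2}$ and matches the \emph{leading} weights $\tfrac1{l^2}r^{2+s}$ on the left with $\tfrac1{l^2}r^{4+s}$ on the right.

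Finally I would collect errors and fix $c$ and $R$. Substituting the identity, the leading Hardy balance cancels in the difference of the two sides of (\ref{filaeq}), leaving the nonnegative perfect square, the nonnegative boundary term, the genuine surplus $\big(\tfrac{4}{(3+s)^2}-c\big)\int r^{2+s}(\partial_r\psi)^2$ from the positive subleading weight, and the two error integrals $\tfrac{4}{(3+s)^2}\int E_2(\partial_r\psi)^2$ and $-\int E_1\psi^2$. Since $|E_2|\le \tfrac{C}{R}r^{2+s}$, taking $c=\tfrac{2}{(3+s)^2}$ keeps the $(\partial_r\psi)^2$–surplus bounded below by $\tfrac{1}{(3+s)^2}\int r^{2+s}(\partial_r\psi)^2$ for $R$ large; and since $|E_1|\le Cr^{s-1}\le \tfrac{C}{R}r^{s}$, a \emph{second} application of the Hardy inequality, now with $\alpha=2+s$ (giving $\int r^s\psi^2\le \tfrac{4}{(1+s)^2}\int r^{2+s}(\partial_r\psi)^2$), bounds $|\int E_1\psi^2|$ by $\tfrac{C}{R}\int r^{2+s}(\partial_r\psi)^2$, absorbed into the surplus once $R$ is chosen large. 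The main obstacle, and the conceptual point of the lemma, is producing this extra $c\int r^{2+s}(\partial_r\psi)^2$ term \emph{while keeping the constant $\tfrac{4}{(3+s)^2}$ sharp}: it cannot be harvested from the Hardy slack $\int(\partial_r\psi+\tfrac{3+s}{2}\tfrac{\psi}{r})^2 r^{4+s}$, which degenerates along near–extremising sequences $\psi\sim r^{-(3+s)/2}$ on which $\partial_r\psi+\tfrac{3+s}{2}\tfrac{\psi}{r}$ is nearly zero whereas $\int r^{2+s}(\partial_r\psi)^2$ is not, so it must instead come from the definite sign of the subleading coefficient of $\sqrt g\,g^{rr}$, and the crux is to verify that all metric–error contributions are genuinely lower order and absorbable for $R$ large.
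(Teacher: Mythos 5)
Your proof is correct and follows essentially the same route as the paper: reduce to compactly supported functions by Lemma \ref{dens}, apply the sharp one--dimensional Hardy inequality with constant $\tfrac{4}{(3+s)^2}$ (dropping the favourably signed boundary term at $r=R$), and harvest the extra term $c\int r^{2+s}(\partial_r\psi)^2$ from the positive subleading piece of $\sqrt{g}\,g^{rr}\,r^s$ (the ``$+1$'' in $g^{rr}$), absorbing all metric errors for $R$ large. The only cosmetic difference is that the paper carries the exact weight through the integration by parts via an antiderivative $\mathcal{G}_s$ of $r^s\sqrt{g}$ and then estimates $\mathcal{G}_s^2$, whereas you expand the weights into leading powers plus errors first; the content is identical.
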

\begin{proof}
Since the functions of compact support are dense in $H^{1,s}_{AdS}\left(\Sigma_{\tau}\right)$, it suffices to establish the estimate for this class. In that case, there is no boundary term at infinity in the integration by parts. Hence we find using Cauchy-Schwarz (and the fact that the boundary term at $r=R$ has a good sign)
\begin{align}
\int_{\Sigma_\tau} \psi^2 r^s \sqrt{g} dr d\omega \leq \frac{4}{\left(3+s\right)^2} \int_{\Sigma_\tau} \left[\frac{\left(3+s\right)^2\mathcal{G}_s^2}{\sqrt{g} r^s}\right] \left(\partial_r\psi\right)^2 dr d\omega \, ,
\end{align}
where $\mathcal{G}_s$ satisfies $\partial_r \mathcal{G}_s = r^s \sqrt{g}$. The Lemma now follows by observing that with the asymptotics of the metric we have $|\mathcal{G}_s| \leq \frac{r^{3+s}}{3+s} + C r^s \ln r$ and hence
\begin{align} 
\left(3+s\right)^2 \mathcal{G}_s^2 \leq r^{6+2s} + C r^{3+2s} \ln r 
\leq l^2 \cdot r^{2s} \cdot r^4 \left(\frac{r^2}{l^2} + 1  \right) - \frac{1}{2} l^2 r^{4+2s} 
\nonumber \\
\leq l^2 r^{2s}  \left(\left(\sqrt{g}\right)^2 + \mathcal{O}\left(r\right)\right)  \left(g^{rr} + \mathcal{O}\left(r^{-1}\right) \right)  - \frac{1}{2} l^2 r^{4+2s}
\nonumber \\
 \leq  l^2 \cdot g^{rr} \left(\sqrt{g}\right)^2 r^{2s} - \frac{1}{4} l^2 \cdot r^{4+2s} 
\end{align}
in $\mathcal{D}$, provided $R$ is chosen sufficiently large.
\end{proof}
The next proposition regards the solutions to an elliptic equation. The operator is the elliptic part of the wave operator:
\begin{align} \label{Ldef}
L \psi := \frac{1}{\sqrt{g}}\partial_i \left(g^{ij} \sqrt{g}\partial_j \psi\right) + \frac{1}{\sqrt{g}} \partial_t \left(g^{tj} \sqrt{g}\right) \partial_j \psi + \frac{\alpha}{l^2} \psi 
\end{align} 
with Roman indices ranging over $r,x,y$ and $\alpha<\frac{9}{4}$.
\begin{proposition} \label{sela}
Let $\mathcal{F} \in H^{0,s}_{AdS} \left(\Sigma_\tau\right)$ for all $s<\sqrt{9-4\alpha}$.
On any spacelike slice $\Sigma_{\tau} \subset \mathcal{D}$, the elliptic problem 
\begin{equation} \label{ellP}
(ELP) = \left\{
\begin{array}{rl} L \psi = \mathcal{F} & \textrm{in } \Sigma_{\tau} \\
\psi=0 &  \textrm{for} \ \  r \rightarrow \infty \\
\psi=0 & \textrm{at} \ \ r=R
\end{array} \right.
\end{equation}
has a unique solution in $H^{2,s}_{AdS} \left(\Sigma_{\tau}\right)$, $0\leq s<\sqrt{9-4\alpha}$ satisfying the estimate
\begin{align}
\| \psi \|^2_{H^{2,s}_{AdS} \left(\Sigma_{\tau}\right)} \leq C_s \int_{\Sigma_{\tau}} \mathcal{F}^2 r^{2+s} dr d\omega \, 
\end{align}
provided that $R$ is sufficiently large depending only on the fixed asymptotically AdS spacetime.
\end{proposition}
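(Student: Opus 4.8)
The plan is to solve the problem (ELP) by a variational argument in a weighted energy space and then bootstrap the resulting $H^{1,s}$ bound to the claimed $H^{2,s}$ estimate by localized elliptic regularity. First I would introduce the Dirichlet space $\mathring{H}^{1,s}_{AdS}(\Sigma_\tau)$, the completion of the smooth functions compactly supported in the open region $\{R<r<\infty\}$ under $\|\cdot\|_{H^{1,s}_{AdS}}$; this is the natural space encoding both $\psi=0$ at $r=R$ and the decay as $r\to\infty$. Testing $L\psi=\mathcal{F}$ against $r^{s}\phi\sqrt{g}$ defines a bilinear form $B[\psi,\phi]$ on this space. Its boundedness, and the boundedness of the linear functional $\phi\mapsto\int_{\Sigma_\tau}\mathcal{F}\,\phi\,r^{s}\sqrt{g}$, are routine consequences of the metric asymptotics (II) and Cauchy--Schwarz, the latter using $\mathcal{F}\in H^{0,s}_{AdS}$ together with $\|\mathcal{F}\|^2_{H^{0,s}_{AdS}}=\int\mathcal{F}^2 r^{s+2}\,dr\,d\omega$.

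The crux is coercivity, and it is here that the range $s<\sqrt{9-4\alpha}$ is born. Testing with $r^{s}\psi$ and integrating by parts (no boundary terms at infinity, by the density argument underlying Lemma \ref{dens}), the principal radial part produces $\int g^{rr}\sqrt{g}(\partial_r\psi)^2 r^{s}$ together with a cross term coming from the derivative of the weight $r^{s}$; integrating this cross term by parts once more and using $g^{rr}\sqrt{g}=r^4/l^2+\ldots$ converts it, up to lower order, into $-\tfrac{s(s+3)}{2l^2}\int\psi^2 r^{s+2}$. Adding the mass contribution $-\tfrac{\alpha}{l^2}\int\psi^2 r^{s}\sqrt{g}$, the total zeroth-order coefficient is $\tfrac{1}{l^2}\big(\tfrac{s(s+3)}{2}+\alpha\big)$. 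Absorbing this into the gradient via the sharp Hardy inequality of Lemma \ref{fila} yields the leading coefficient
\[ 1-\frac{4}{(3+s)^2}\left(\frac{s(s+3)}{2}+\alpha\right)=\frac{9-s^2-4\alpha}{(3+s)^2}, \]
which is strictly positive precisely for $0\le s<\sqrt{9-4\alpha}$ (with $\alpha<\tfrac94$ guaranteeing positivity at $s=0$). The angular term has a favorable sign, and the off-diagonal components $g^{rA},g^{tr}$ and the first-order $\partial_t$ term carry extra powers of $1/r$ and are absorbed by taking $R$ large. Keeping a fraction of the gradient and reapplying Lemma \ref{fila} to recover the $L^2$ and lower-order radial pieces gives coercivity in the full $\|\cdot\|_{H^{1,s}_{AdS}}$ norm, so Lax-Milgram produces a unique weak solution with $\|\psi\|_{H^{1,s}_{AdS}}\le C_s\|\mathcal{F}\|_{H^{0,s}_{AdS}}$; uniqueness in $H^{2,s}\subset\mathring{H}^{1,s}_{AdS}$ is immediate from the same coercivity.

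To upgrade to the $H^{2,s}$ estimate I would use localized (weighted) elliptic regularity. On each dyadic annulus $\{r_i\le r\le r_{i+1}\}$ I pass to $u=\log r$ and rescale the spheres by $r^{-1}$; in these variables the principal part of $L$ is comparable to the fixed asymptotically hyperbolic slice metric $\bar g$ with $\bar g^{rr}=r^2/l^2$, $\bar g^{AB}=r^{-2}\slashed{g}^{AB}$, and is uniformly elliptic with coefficients of bounded geometry, uniformly in $i$. Standard second-order interior estimates, supplemented by boundary estimates near the smooth Dirichlet boundary $r=R$, give $\|\psi\|^2_{H^2(\mathrm{slab})}\le C\big(\|\mathcal{F}\|^2_{L^2(\mathrm{slab})}+\|\psi\|^2_{H^1(\mathrm{slab})}\big)$ with $C$ independent of $i$; in the original weights this controls $\int r^{s+6}(\partial_r\partial_r\psi)^2+\int r^{s+4}|\slashed{\nabla}\partial_r\psi|^2+\int r^{s+2}|\slashed{\nabla}\slashed{\nabla}\psi|^2$ by $\int r^{s+2}\mathcal{F}^2$ plus the $H^{1,s}$-integrand. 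Summing over $i$ (the weight $r^s$ is comparable to a constant on each slab) and invoking the $H^{1,s}$ bound already established yields $\|\psi\|^2_{H^{2,s}_{AdS}}\le C_s\int_{\Sigma_\tau}\mathcal{F}^2 r^{2+s}\,dr\,d\omega$. That the weights match is transparent at top radial order: solving the equation for $\partial_r\partial_r\psi$ gives $\partial_r\partial_r\psi\sim (g^{rr})^{-1}\mathcal{F}\sim l^2 r^{-2}\mathcal{F}$, whence $\int r^{s+6}(\partial_r\partial_r\psi)^2\sim l^4\int r^{s+2}\mathcal{F}^2$.

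I expect two genuinely delicate points. The conceptual one is the coercivity threshold: one must combine the weight-derivative cross term with the mass term \emph{before} applying Hardy, otherwise the $-s^2$ that produces exactly the range $s<\sqrt{9-4\alpha}$ is missed and one gets the wrong constant. The main technical obstacle is the weighted $H^{2,s}$ estimate: because $L$ degenerates as $r\to\infty$ (the angular directions are weaker than the radial one in $\bar g$), both uniform ellipticity and the precise reproduction of the $H^{2,s}$ weights become visible only after the dyadic rescaling to bounded geometry, and some care is needed to keep the local constants $i$-independent and to ensure, consistent with the right-hand side involving only an $L^2$ norm of $\mathcal{F}$, that no angular regularity of $\mathcal{F}$ is required.
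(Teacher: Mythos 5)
Your proposal is correct, and its analytic heart --- the coercivity computation in which the weight-derivative cross term contributes $-\tfrac{s(s+3)}{2l^2}\int r^{s+2}\psi^2$, is combined with the mass term, and is then absorbed via the sharp Hardy inequality of Lemma \ref{fila} to leave the coefficient $\tfrac{9-s^2-4\alpha}{(3+s)^2}$ --- is exactly the computation the paper performs in (\ref{zeroa}) and (\ref{h2}), yielding the same threshold $s<\sqrt{9-4\alpha}$. Where you genuinely differ is in the packaging. The paper never works on the unbounded slice directly: it solves the truncated problems $(ELP_i)$ on $\Sigma_\tau^i$ (existence there is standard since the weights are bounded and the kernel is trivial by the same coercivity), proves the weighted estimates (\ref{gos2}) and (\ref{gos}) uniformly in $i$, and then shows the sequence $\psi_i$ is Cauchy in $H^{2,s}_{AdS}$ by comparing $\psi_i$ with $\psi_{i+k}$ and exploiting a gain of weight $\tilde s>s$ on $\{r\ge 2^{i-1}R\}$ to extract the smallness factor $(2^{i-1}R)^{s-\tilde s}$. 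You instead invoke Lax--Milgram in the weighted Dirichlet space, which is cleaner and dispenses with the Cauchy-sequence step, at the price of verifying boundedness and coercivity of the non-symmetric form on the unbounded domain (which your computation does; your uniqueness remark also needs, and gets, the density of compactly supported functions as in Lemma \ref{dens} to kill the boundary term at infinity). For the $H^2$ upgrade the paper uses the Evans-style tangential difference-quotient argument plus recovery of $\partial_r\partial_r\psi$ from the equation, whereas you rescale dyadically to bounded geometry and apply interior/boundary elliptic estimates with $i$-independent constants; both are standard and both reproduce the weights correctly (your check $\partial_r\partial_r\psi\sim l^2 r^{-2}\mathcal{F}$ matches the norm definitions). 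The one thing the paper's exhaustion buys that your route does not deliver for free is the uniform estimate on the truncated domains themselves, which is precisely what Proposition \ref{trela} (non-trivial boundary data at $r=2^iR$) and the subsequent hyperbolic limiting argument consume; if you adopt the Lax--Milgram route you would still need to supply those finite-domain estimates separately.
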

\begin{remark} \label{unelliptic}
Observe that membership in the space $H^{2,s}_{AdS}\left(\Sigma_\tau\right)$ already imposes strong $r$-decay at infinity. The solution is \underline{non-unique} if this decay is weakened. For instance, one checks that for $\alpha=2$ and $g$ being the pure-AdS metric (with $l=1$, say) $\psi = \frac{c_1}{r} + \frac{c_2 arccot (r)}{r}$ defines a two parameter family of solutions to $L\psi=0$ with both branches decaying to zero. Hence there is non-uniqueness of $(ELP)$ in the class of solutions vanishing at infinity (while only the $c_2$-branch lives in $H^{2,0}_{AdS}\left(\Sigma_\tau\right)$).
\end{remark}
\begin{proof}
The solution may be obtained as a sequence of solutions to the finite problem
\begin{equation} \label{ellPi}
(ELP_i) = \left\{
\begin{array}{rl} L \psi_i = \mathcal{F} & \textrm{in } \Sigma^i_{\tau} \\
\psi_i=0 &  \textrm{for} \ \  r =2^iR \\
\psi_i=0 & \textrm{at} \ \ r=R
\end{array} \right.
\end{equation}
The existence of a unique $H^2$-solution to $(ELP_i)$ is obtained by standard methods as $\Sigma_\tau^i$ is compact, the weights in $r$ are bounded and the operator $L$ has trivial kernel, cf.~footnote \ref{Fredholm}). We will show that any $H^2$-solution to $(ELP)_i$ satisfies the \emph{weighted} estimate
\begin{align} \label{gos}
\| \psi_i \|^2_{H^{2,s}_{AdS} \left(\Sigma_{\tau}^i\right)} \leq C_s \int_{\Sigma_{\tau}^i} \mathcal{F}^2 r^{2+s} dr d\omega 
\end{align}
for a uniform $C_s$ not depending on $i$. To show (\ref{gos}), we first establish\footnote{Note that this immediately implies that the kernel of $L$ is trivial and therefore, by the Fredholm alternative, the existence of a unique weak solution to problem $(ELP_i)$. \label{Fredholm}} 
\begin{align} \label{gos2}
\| \psi_i \|^2_{H^{1,s}_{AdS} \left(\Sigma_{\tau}^i\right)} \leq C_s \int_{\Sigma_{\tau}^i} \mathcal{F}^2 r^{2+s} dr d\omega \, .
\end{align}
Integrating by parts yields 
\begin{align} \label{zeroa}
\frac{C}{\epsilon} \int_{\Sigma_{\tau}^i} \mathcal{F}^2 r^{2+s} dr d\omega + \epsilon \int_{\Sigma_{\tau}^i} \psi_i^2 r^{2+s} dr d\omega \geq -\int_{\Sigma_\tau^i} \mathcal{F} \sqrt{g} r^s \psi_i dr d\omega = \nonumber \\
-\int_{\Sigma_\tau^i} L \psi_i \sqrt{g} r^s \psi_i dr d\omega 
= \int_{\Sigma_\tau^i} \left[g^{rr} \left(\partial_r \psi_i \right)^2 + g^{AB} \partial_A \psi_i \partial_B \psi_i  \right] r^s \sqrt{g} dr d\omega \nonumber \\
- \int_{\Sigma_\tau^i} \left[\frac{\alpha}{l^2} \sqrt{g} r^{s} + \frac{s}{2} \  \partial_r \left( r^{s-1} \sqrt{g} g^{rr} \right)\right] \psi_i^2  dr d\omega \nonumber \\
+ \int_{\Sigma_\tau^i} dr d\omega r^s \ \partial_t \left(g^{tj} \sqrt{g}\right) \left(\partial_j \psi_i\right) \psi_i 
+  \int_{\Sigma_\tau^i} g^{rA} \partial_A \psi_i \partial_r \psi_i r^s \sqrt{g} dr d\omega
\end{align}
for any $\epsilon>0$. Note that the boundary terms vanish in view of $\psi_i=0$ at both boundaries.  For the last line in (\ref{zeroa}) we observe that $|\partial_t \left(g^{tr} \sqrt{g}\right)| \leq \frac{C}{r}$ and $|\partial_t \left(g^{tA} \sqrt{g}\right)| \leq \frac{C}{r^3}$ as well as $|g^{rA}| \leq \frac{C}{r^4}$ for a constant depending only on the fixed background. 

The inequality (\ref{gos2}) would follow if we can absorb the last two lines of (\ref{zeroa}) (and the $\epsilon$-term on the left) by the positive derivative terms in the second line. For $s=0$, this is immediate by Lemma \ref{fila}: One uses the ``good" derivative term on the left of (\ref{filaeq}) to absorb (exploiting the strong $r$-decay as a smallness factor) the $\psi \psi_r$-terms  arising from the last line in (\ref{zeroa}), while the angular derivatives in the second line (and again the good term in Lemma \ref{fila}) absorb the terms of the form $\psi \psi_A$. Finally, the terms of the form $\partial_A \psi \partial_r \psi$ are absorbed in a similar fashion. Note that $\epsilon \rightarrow 0$ for $s \rightarrow \sqrt{9-4\alpha}$.

With (\ref{gos2}) established for $s=0$, it suffices to absorb zeroth order terms in (\ref{zeroa}) which have stronger $r$-weights than $\int r^2 \psi_i^2 dr d\omega$, since the latter are already controlled by the estimate for $s=0$. In particular, for $s\leq 2$, it suffices to show the Hardy inequality
\begin{align} \label{h2}
\frac{1}{l^2} \int_{\Sigma_\tau^i} \left[ \alpha + \frac{1}{2} s\left(s+3\right)  \right] r^{s} \sqrt{g} \psi_i^2 dr d\omega <  \int_{\Sigma_\tau^i} g^{rr} \left(\partial_r \psi_i \right)^2  r^{2+s} \sqrt{g} dr d\omega \, .
\end{align}
Repeating once more the proof of Lemma \ref{fila} one finds that this holds for $s<\sqrt{9-4\alpha}$. This establishes (\ref{gos2}) for $s < \min \left(2,\sqrt{9-4\alpha}\right)$. But then, if $\sqrt{9-4\alpha} \geq 2$ it suffices to absorb zeroth the zeroth order terms in (\ref{zeroa}), which have stronger weights than $\int r^{4-\delta} \psi_i^2 dr d\omega$. Hence it again suffices to establish (\ref{h2}), which we have seen holds for $s<\sqrt{9-4\alpha}$. This proves (\ref{gos2}) for the full range $0 \leq s < \sqrt{9-4\alpha}$.

To prove (\ref{gos}) given the weighted (\ref{gos2}) is then fairly standard. We sketch the argument of Evans \cite{Evans} (pp 317). Consider the tangential part of $L$,
\begin{align}
v_ i = \frac{1}{\sqrt{g}} \partial_A \left( \sqrt{g} g^{AB} \partial_B \psi_i \right) \, .
\end{align}
One can construct a finite differences  approximation of $v_i$, which is seen to be in $H^1_0\left(\Sigma_\tau^i\right)$ because $\psi$ vanishes on the boundary and only ``derivatives" (finite differences) tangential to the the boundary are involved. To emphasize the meat of the argument, we will work with $v_i$ itself and pretend we have enough regularity to justify the following integration by parts (see \cite{Evans} for the computation in terms of the finite differences approximation). We integrate the first term in the identity
\begin{align}
\int_{\Sigma_\tau^i} \sqrt{g} r^s  \left[ g^{kl} \partial_k \psi_i \partial_lv_i  +  v_i \left( \textrm{terms involving $\psi_i$ or $\partial \psi_i$} \right) \right] dr d\omega = \int_{\Sigma_\tau^i} \sqrt{g} r^s  \mathcal{F}v_i dr d\omega  \nonumber 
\end{align}
by parts, moving the angular derivative in $v_i$ onto $\psi_i$. This establishes
\begin{align} \label{ap}
\int_{\Sigma_\tau^i} \sqrt{g} r^s g^{kl} g^{AB} \partial_A \partial_k \psi_i \partial_B \partial_l \psi_i \leq C_s \int_{\Sigma_{\tau}^i} \mathcal{F}^2 r^{2+s} dr d\omega \, ,
\end{align}
where we used that $\psi_i$ and first derivatives of $\psi_i$ are already controlled in the weighted norm (\ref{gos2}) and that one can always borrow a little bit of the $v_i$-norm from the main term on the left of (\ref{ap}). Finally, one estimates the missing $rr$-derivative from writing
\begin{align}
g^{rr} \partial_r \partial_r \psi_i = L \psi + \textrm{terms already estimated by (\ref{gos2}) and (\ref{ap})}
\end{align}
to prove the full (\ref{gos}).

We now show that our sequence of solutions $\psi_i$ to the problem $(ELP_i)$ convergences \emph{strongly} in $H^{2,s}_{AdS}$ to a solution of the original problem $(ELP)$.

For this we consider the difference of two solutions $\psi_{i,k} = \psi_i - \psi_{i+k}$ for any $k\geq 0$. The latter satisfies $L \psi_{i,k}=0$ in the region $\Sigma_\tau^i$, $\psi_{i,k}=0$ on $r=R$, while at $r=2^{i} R$ the quantity $\psi_{i,k}$ is equal to the trace of the solution $\psi_{i+k}$ on this boundary. We introduce the following cut-off version of $\psi_{i+k}$: For any $k\geq 0$,
$\psi_{i+k}^\chi = \left(1-\chi_{i-1}\left(r\right) \right)\cdot \psi_{i+k}$ where $\chi_i$ was defined in Lemma \ref{dens}. Note that $\psi_{i+k}^\chi$ agrees with $\psi_{i+k}$ for $r\geq 2^i R$, while it vanishes for $r \leq 2^{i-1}R$. Clearly,
\begin{align}
\| \psi^\chi_{i+k} \|^2_{H^{2,s}_{AdS} \left(\Sigma^{i+k}_\tau\right)} \leq C \| \psi_{i+k} \|^2_{H^{2,s}_{AdS}\left(\Sigma_\tau^{i+k} \cap \{ r\geq 2^{i-1}R \} \right)} \, .
\end{align}
It follows that we have $L \left(\psi_{i,k} - \psi_{i+k}^\chi\right) = - L \psi_{i+k}^\chi$ in $\Sigma_\tau^i$, with the quantity in the brackets satisfying trivial boundary conditions on both boundaries. From this equation one derives (as before) the estimate
\begin{align}
\|\psi_{i,k}\|^2_{H^{2,s}_{AdS} \left(\Sigma_{\tau}^i\right)} \leq C \|\psi_{i+k}\|^2_{H^{2,s}_{AdS} \left(\Sigma^{i+k}_{\tau} \cap \{ r\geq 2^{i-1}R \} \right)} \, 
\end{align}
for any $k >0$. We estimate the last term by setting $\tilde{s} = s + \frac{1}{2}\left(\sqrt{9-4\alpha} - s\right) < \sqrt{9-4\alpha}$ and estimating
\begin{align}
\|\psi_{i+k}\|^2_{H^{2,s}_{AdS} \left(\Sigma^{i+k}_{\tau} \cap \{ r\geq 2^{i-1}R \} \right)} \leq \left(2^{i-1}R\right)^{s-\tilde{s}} \|\psi_{i+k}\|^2_{H^{2,\tilde{s}}_{AdS} \left(\Sigma^{i+k}_{\tau} \right)}  \nonumber \\ \leq C_{\tilde{s}}\left(2^{i-1}R\right)^{s-\tilde{s}}  \|\mathcal{F}\|^2_{H^{0,\tilde{s}}_{AdS}\left(\Sigma_\tau\right)} \nonumber
\end{align}
for any $k\geq 0$, where the uniform estimate (\ref{gos}) has been used. Clearly, the right hand side goes to zero as $i \rightarrow \infty$. This establishes the existence of an $H^{2,s}$-solution to problem (ELP).

Finally, we show that any solution to $(ELP)$ living in $H_{AdS}^{2,s} \left(\Sigma_\tau\right)$ is unique. For this we have to show that (ELP) for $\mathcal{F}=0$ admits only the zero solution. Consider the difference of two solutions $\tilde{\psi}=\psi_1-\psi_2$, which satisfies $L\tilde{\psi}=0$ and vanishes at infinity and at the boundary. Multiply this by $\sqrt{g}\tilde{\psi}$ and integrate by parts over $\Sigma_{\tau}^i$. This yields (repeating the computations above)
\begin{align}
\| \tilde{\psi} \|_{H^{1,0}_{AdS}\left(\Sigma_{\tau}^i\right)} \leq \| \psi_1\|_{H^{2,0}_{AdS} \left(\Sigma_{\tau} \cap \{r\geq 2^{i-1}R\} \right)} + \| \psi_2\|_{H^{2,0}_{AdS} \left(\Sigma_{\tau} \cap \{r\geq 2^{i-1}R\} \right)} \, ,
\end{align}
where the terms on the right appear from estimating the boundary term at $r=2^iR$ arising in the integration by parts. Since $\psi_1, \psi_2$ are in $H^{2,0}_{AdS}\left(\Sigma_\tau\right)$, the right hand side goes to zero as $i \rightarrow \infty$ establishing that $\tilde{\psi}=0$. 
\end{proof}

For the finite problem with \emph{non-trivial} boundary conditions we can show
\begin{proposition} \label{trela}
Let $\mathcal{F} \in H^{0,s} \left(\Sigma_\tau\right)$ and $u \in H^{2,s}\left(\Sigma_{\tau}\right)$ for any $0 \leq s<\sqrt{9-4\alpha}$ be given. Then on any spacelike slice $\Sigma^i_{\tau} \subset \mathcal{D}$, the elliptic problem 
\begin{equation} \label{ellPf}
(ELP_u) = \left\{
\begin{array}{rl} L \psi = \mathcal{F} & \textrm{in } \Sigma^i_{\tau} \\
\psi=u|_{r=2^iR} &  \textrm{on} \ \ \ r = 2^i R   \\
\psi=0 & \textrm{on} \ \ \ r=R
\end{array} \right.
\end{equation}
has a unique solution in $H^{2,s} \left(\Sigma^i_{\tau}\right)$ satisfying the estimate
\begin{align}
\| \psi \|^2_{H^{2,s}_{AdS} \left(\Sigma^i_{\tau}\right)} \leq C_s \int_{\Sigma_{\tau}} \mathcal{F}^2 r^{2+s} dr d\omega + C \| u \|^2_{H^{2,s}_{AdS} \left(\Sigma_{\tau} \cap \{ r \geq 2^{i-1}R\} \right)}  \, .
\end{align}
provided that $R$ (defining $\mathcal{D}$ as in (\ref{reR})) is chosen sufficiently large depending only on the fixed asymtotically AdS spacetime.
\end{proposition}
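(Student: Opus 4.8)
The plan is to reduce the inhomogeneous boundary value problem $(ELP_u)$ to the homogeneous finite problem (\ref{ellPi}), whose solvability and uniform weighted bound (\ref{gos}) were already obtained in the course of proving Proposition \ref{sela}. The reduction is carried out by subtracting a controlled extension of the boundary datum $u$ into the interior. Using the cut-off $\chi_{i-1}$ of Lemma \ref{dens}, I set $u^\chi := \left(1-\chi_{i-1}(r)\right) u$. Exactly as in the convergence argument of Proposition \ref{sela}, $u^\chi$ vanishes for $r \leq 2^{i-1}R$ (hence $u^\chi = 0$ at $r=R$) and coincides with $u$ for $r \geq 2^i R$ (hence $u^\chi|_{r=2^iR} = u|_{r=2^iR}$). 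Since $|\partial_r^m \chi_{i-1}| \leq C_m r^{-m}$, the cut-off respects the $r$-weights of section \ref{norms}, and one obtains
\[
\|u^\chi\|^2_{H^{2,s}_{AdS}(\Sigma^i_\tau)} \leq C \, \|u\|^2_{H^{2,s}_{AdS}\left(\Sigma_\tau \cap \{r \geq 2^{i-1}R\}\right)} \, .
\]

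The point of this construction is that $\psi$ solves $(ELP_u)$ if and only if $\phi := \psi - u^\chi$ solves $L\phi = \mathcal{F} - L u^\chi$ in $\Sigma^i_\tau$ with \emph{trivial} boundary data on both $r=R$ and $r=2^iR$ --- that is, precisely problem (\ref{ellPi}) with source $\mathcal{F} - L u^\chi$. To apply the existing theory I must check that this source lies in $H^{0,s}_{AdS}$ with the right control, namely
\[
\int_{\Sigma^i_\tau} \left(L u^\chi\right)^2 r^{2+s} \, dr \, d\omega \leq C \, \|u^\chi\|^2_{H^{2,s}_{AdS}(\Sigma^i_\tau)} \leq C \, \|u\|^2_{H^{2,s}_{AdS}\left(\Sigma_\tau \cap \{r \geq 2^{i-1}R\}\right)} \, .
\]
This is a weight-counting computation: the leading term $g^{rr}\partial_r^2 u^\chi \sim \frac{r^2}{l^2}\partial_r^2 u^\chi$ contributes $\int r^{2+s}\, r^4 \left(\partial_r^2 u^\chi\right)^2 \, dr\, d\omega$, which is exactly the top-order term of the $H^{2,s}_{AdS}$-norm, while all remaining contributions to $L$ (including the first-order $\partial_t$-term) carry strictly weaker $r$-weights by the asymptotics (II) and are therefore dominated by lower-order pieces of the same norm.

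With the source controlled, I invoke the solvability of the finite homogeneous problem (\ref{ellPi}) together with (\ref{gos}) to produce a unique $H^{2,s}_{AdS}$-solution $\phi$ satisfying
\[
\|\phi\|^2_{H^{2,s}_{AdS}(\Sigma^i_\tau)} \leq C_s \int_{\Sigma^i_\tau} \left(\mathcal{F} - L u^\chi\right)^2 r^{2+s} \, dr \, d\omega \leq C_s \int_{\Sigma_\tau} \mathcal{F}^2 r^{2+s} \, dr \, d\omega + C \, \|u\|^2_{H^{2,s}_{AdS}\left(\Sigma_\tau \cap \{r \geq 2^{i-1}R\}\right)} \, .
\]
Then $\psi = \phi + u^\chi$ solves $(ELP_u)$, and the triangle inequality together with the bound on $\|u^\chi\|_{H^{2,s}_{AdS}}$ yields the claimed estimate. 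Uniqueness follows because the difference of two solutions of $(ELP_u)$ solves (\ref{ellPi}) with $\mathcal{F}=0$ and zero boundary data, and $L$ has trivial kernel on $\Sigma^i_\tau$ (cf.\ the $s=0$ case of (\ref{gos2}) and footnote \ref{Fredholm}).

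The main obstacle is the weighted mapping estimate $L : H^{2,s}_{AdS} \to H^{0,s}_{AdS}$ of the second paragraph: one must verify that the precise $r$-weights built into the norms of section \ref{norms} are exactly matched to the decay rates of the metric coefficients in (II), so that every term produced by $L u^\chi$ --- including the error terms generated when $\partial_r$ falls on the cut-off $\chi_{i-1}$ --- is absorbed into the $H^{2,s}_{AdS}$-norm of $u^\chi$. The cut-off errors are in fact harmless thanks to $|\partial_r^m \chi_{i-1}| \leq C_m r^{-m}$, which only improves the weights, so the genuine content is the bookkeeping for the metric terms.
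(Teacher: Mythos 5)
Your proposal is correct and follows essentially the same route as the paper: the paper's proof likewise sets $\tilde u = (1-\chi_{i-1})u$, notes $\|\tilde u\|_{H^{2,s}_{AdS}(\Sigma_\tau)} \leq C\|u\|_{H^{2,s}_{AdS}(\Sigma_\tau \cap \{r\geq 2^{i-1}R\})}$, and reduces to the homogeneous problem $L(\psi-\tilde u)=\mathcal{F}-L\tilde u$ already handled in Proposition \ref{sela}. Your additional verification of the weighted mapping bound $\int (L u^\chi)^2 r^{2+s}\,dr\,d\omega \leq C\|u^\chi\|^2_{H^{2,s}_{AdS}}$ is a detail the paper leaves implicit, and your bookkeeping of it is consistent with the metric asymptotics (II).
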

Note that the trace of $u$ on the boundary is well-defined in (\ref{ellPf}). Note also that in the limit as $i \rightarrow \infty$ one obtains the result of the previous proposition. 
\begin{proof}
Define the function $\tilde{u} = \left(1-\chi_{i-1} \left(r\right) \right) u$, where $\chi_i$ is the cut-off function defined in the proof of Lemma \ref{dens}. Note that $\tilde{u}$ vanishes for $r \leq 2^{i-1}R$ and agrees with $u$ for $r \geq 2^iR$. Moreover, $\|\tilde{u}\|_{H^{2,s}_{AdS} \left(\Sigma_{\tau}\right)} \leq C \|u\|_{H^{2,s}_{AdS} \left(\Sigma_{\tau} \cap \{r \geq 2^{i-1}R \} \right)}$. We can hence consider the problem $L \left(\psi - \tilde{u}\right) = \mathcal{F} - L \tilde{u}$ with trivial boundary conditions, which has already been dealt with in Proposition \ref{sela}.
\end{proof}
We will, from now on, regard the constant $R$ (and hence the domain $\mathcal{D}$, cf.~(\ref{reR})) as fixed, in particular so that all results of section \ref{elliptic} are true.
\section{The initial data} \label{idata}
\begin{definition}
An \underline{$H^{2}$-initial data set} consist of a triple $\left(u,v,w\right)$ where $v \in H_{AdS}^{1,0} \left(\Sigma_0\right)$, $w \in H_{AdS}^{0,-2} \left(\Sigma_0\right)$ are the free data, and $u$ is the unique $H_{AdS}^{2,0}$-solution vanishing at infinity of the elliptic equation (cf.~Proposition \ref{sela})
\begin{align} \label{dk}
L u = - \frac{1}{\sqrt{g}} \partial_t \left(g^{tt} \sqrt{g} \right) v -  g^{tt}  \cdot  w - 2g^{tj} \partial_j v - \frac{1}{\sqrt{g}} \partial_j \left(g^{tj} \sqrt{g}\right) v =: \mathcal{F}_0 \, .
\end{align}
\end{definition}
Note that with the assumptions on $v$ and $w$, the right hand side of (\ref{dk}) is in fact in $H_{AdS}^{0,2} \left(\Sigma_0\right)$. By Proposition \ref{sela} we obtain that $u \in H_{AdS}^{2,s} \left(\Sigma_0\right)$ for any $0\leq s < \min \left(\sqrt{9-4\alpha},2\right)$ with the estimate
\begin{align} \label{dataes}
\|u\|^2_{H^{2,s}_{AdS}\left(\Sigma_0\right)} \leq C_s \|\mathcal{F}_0\|^2_{H^{0,2}_{AdS}\left(\Sigma_0\right)} \leq C_s   \left(\|v\|^2_{H^{1,0}_{AdS}\left(\Sigma_0\right) } + \|w\|^2_{H^{0,-2}_{AdS}\left(\Sigma_0\right)}\right)  \, .
\end{align}
We now define a sequence approximating the data. Namely, since $v \in H_{AdS}^{1,0} \left(\Sigma_0\right)$ and $w \in H_{AdS}^{0,-2} \left(\Sigma_0\right)$, we can approximate $v$ and $w$ by a sequence of smooth functions of compact support, $v_n$, $w_n$ in their respective spaces. The associated (by~(\ref{dk})) solution $u_n$ then converges to $u$ in $H_{AdS}^{2,s}\left(\Sigma_0\right)$. Note that $u_n$ does \emph{not} have compact support. We summarize this as
\begin{align}
\left(u_n,v_n,w_n\right)\rightarrow \left(u,v,w\right)  \textrm{\ \ \  in $H^{2,s}_{AdS} \left(\Sigma_0\right) \times H^{1,0}_{AdS}  \left(\Sigma_0\right) \times H^{0,-2}_{AdS} \left(\Sigma_0\right)$}
\end{align}
In particular, the estimate (\ref{dataes}) also holds for the approximating sequence.

\begin{remark}
The underlying reason for constructing the data in this fashion instead of specifying $u$ and $v$ as the free data is that even if $u \in H_{AdS}^{2,s}\left(\Sigma_0\right)$ and $v \in H_{AdS}^{1,0}\left(\Sigma_0\right)$, the expression $Lu$ (and hence $w$) is not necessarily in $H_{AdS}^{0,-2}\left(\Sigma_0\right)$ because of the asymptotically AdS $r$-weights in the metric. The improved decay for $Lu$ therefore has to be imposed initially, which is achieved by the above construction. Requiring $w \in H_{AdS}^{0,-2}$ is in turn necessary, since this is what will be propagated by the (commuted) energy estimate.
\end{remark}
\section{The theorem} \label{theosec}
Let $\left(\mathcal{D},g\right)$ be an asymptotically AdS spacetime patch as defined in section \ref{aAdS} and recall that $R$ has been chosen such that the estimates of section 4 hold. Given an $H^{2}$-initial data set $\left(u,v,w\right)$, consider the following initial boundary value problem with Dirichlet boundary conditions:
\begin{equation} \label{probP}
\mathcal{P} = \left\{
\begin{array}{rl} \Box_g \psi - \frac{\alpha}{l^2} \psi = 0 & \text{in } \mathcal{D} \\
\psi=0 & \text{on } \mathcal{I}  \ \ \text{and}  \ {\mathcal{B}_0}\\
\psi|_{\Sigma_0} = u  & \text{ , \ \   } \partial_t \psi|_{\Sigma_0} = v \, .
\end{array} \right.
\end{equation}
We have the following well-posedness statement:
\begin{theorem} \label{mto}
Let $\alpha<\frac{9}{4}$. Given an $H^{2}$-initial data set $\left(u,v,w\right)$, there exists a solution $\psi \left(t,\bold{x}\right)$ to the initial boundary value problem $\mathcal{P}$ in $\mathcal{D}$ with the property that $ \psi \in C\left(\left[0,T\right], H_{AdS}^{2,s} \left(\Sigma_\tau\right) \right)$, $\partial_t \psi \in C\left(\left[0,T\right], H_{AdS}^{1,0} \left(\Sigma_\tau\right) \right)$,  $\partial_t \partial_t \psi \in C\left(\left[0,T\right], H_{AdS}^{0,-2} \left(\Sigma_\tau\right) \right)$ for any $0\leq s < \min\left(2,\sqrt{9-4\alpha}\right)$ and satisfying the estimate
\begin{align}  \label{mainestt}
\| \psi \|^2_{H^{2,s}_{AdS} \left(\Sigma_t\right)} + \| \partial_ t \psi \|^2_{H^{1,0}_{AdS} \left(\Sigma_t\right)} + \| \partial_t \partial_t \psi \|^2_{H^{0,-2}_{AdS} \left(\Sigma_t\right)} \nonumber \\ \leq  \left(\|v\|^2_{H^{1,0}_{AdS}\left(\Sigma_0\right) } + \|w\|^2_{H^{0,-2}_{AdS}\left(\Sigma_0\right)}\right) e^{C_s t} \, . 
\end{align}
\end{theorem}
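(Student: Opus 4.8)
The strategy, as outlined in the introduction, is a limiting procedure over the approximating boundaries $\mathcal{B}_i$, combined with the elliptic estimates of section \ref{elliptic} to upgrade control of time derivatives into control of all derivatives. I will first establish the theorem for smooth compactly supported data $(u_n, v_n, w_n)$ (where $u_n$ solves (\ref{dk})) and then pass to the limit in $n$ using the linearity of $\mathcal{P}$ together with the continuous dependence encoded in (\ref{mainestt}).

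**The finite problem and the fundamental energy estimate.** For each $i$, consider the initial boundary value problem $\mathcal{P}_i$ obtained from $\mathcal{P}$ by replacing the boundary condition at $\mathcal{I}$ with a Dirichlet condition $\psi=0$ on $\mathcal{B}_i$. Since the $r$-weights in the metric are bounded on the compact-in-$r$ region $\mathcal{D}_i$, this is a standard initial boundary value problem for a wave equation on a globally hyperbolic region with timelike boundaries $\mathcal{B}_0$, $\mathcal{B}_i$, and one obtains a unique solution $\psi_i$ by standard techniques (Galerkin approximation or semigroup theory on the compact spatial slice $\Sigma_\tau^i$). The heart of the matter is to obtain \emph{uniform in $i$} estimates. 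For this I use the energy-momentum tensor $\mathbb{T}[\psi]$ associated to (\ref{mwe}) and the multiplier $\partial_t$, which is approximately Killing near infinity with the deformation-tensor bounds recorded in section \ref{approK}. Applying the divergence theorem to $\mathbb{T}[\psi_i](\partial_t, \cdot)$ over the region bounded by $\Sigma_0^i$, $\Sigma_t^i$, $\mathcal{B}_0$, $\mathcal{B}_i$, the Dirichlet conditions kill the flux through $\mathcal{B}_0$ and $\mathcal{B}_i$ (the boundary term involves only tangential derivatives of $\psi_i$, which vanish), and the bulk error term is controlled by $\int |{}^{(\partial_t)}\pi \cdot \mathbb{T}[\psi_i]|$. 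Because the mass satisfies $\alpha < \frac{9}{4}$, the relevant Hardy inequality (Lemma \ref{fila}) guarantees that the flux $\int_{\Sigma_t} \mathbb{T}[\psi_i](\partial_t, n_{\Sigma_t})$ is coercive, i.e.~it controls $\|\psi_i\|^2_{H^{1,0}_{AdS}} + \|\partial_t\psi_i\|^2_{H^{0,-2}_{AdS}}$ from below up to the lower-order $\psi^2$ term, which is exactly where the Breitenlohner-Freedman bound enters. A Gronwall argument then yields, uniformly in $i$,
\begin{equation} \label{e0plan}
\| \psi_i \|^2_{H^{1,0}_{AdS}(\Sigma_t)} + \| \partial_t \psi_i \|^2_{H^{0,-2}_{AdS}(\Sigma_t)} \leq C\left(\|v\|^2_{H^{1,0}_{AdS}(\Sigma_0)} + \|w\|^2_{H^{0,-2}_{AdS}(\Sigma_0)}\right) e^{C_s t} \, .
\end{equation}

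**Commutation and elliptic upgrade.** To reach the full norm (\ref{mainestt}) I commute the equation with $\partial_t$. Since $\partial_t$ is only approximately Killing, the commutator $[\Box_g, \partial_t]\psi_i$ is an error term whose coefficients decay by the rates in section \ref{approK}; one checks these errors are controllable in the $H^{0,-2}_{AdS}$-energy and absorbable by Gronwall. Repeating the $\partial_t$-energy estimate for $\partial_t\psi_i$ then propagates $\|\partial_t\psi_i\|^2_{H^{1,0}_{AdS}} + \|\partial_t\partial_t\psi_i\|^2_{H^{0,-2}_{AdS}}$ with the same uniform bound. It remains to recover the \emph{spatial} $H^{2,s}_{AdS}$-control of $\psi_i$, and this is precisely what the elliptic machinery of section \ref{elliptic} delivers: rewriting the equation as $L\psi_i = \mathcal{F}_i$ with $\mathcal{F}_i$ built from $\partial_t\partial_t\psi_i$ and $\partial_t\psi_i$ (a rearrangement of (\ref{dk}) for $t>0$), the boundary values of $\psi_i$ on $\Sigma_t^i$ are trivial at $r=R$ and $r=2^iR$, so Proposition \ref{trela} applies and gives
\begin{equation} \label{ellplan}
\| \psi_i \|^2_{H^{2,s}_{AdS}(\Sigma_t^i)} \leq C_s \int_{\Sigma_t^i} \mathcal{F}_i^2 r^{2+s}\, dr\, d\omega \leq C_s\left(\|\partial_t\partial_t\psi_i\|^2_{H^{0,-2}_{AdS}} + \|\partial_t\psi_i\|^2_{H^{1,0}_{AdS}}\right)
\end{equation}
for $0 \leq s < \min(2,\sqrt{9-4\alpha})$, uniformly in $i$. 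Combining (\ref{e0plan})--(\ref{ellplan}) establishes (\ref{mainestt}) for each $\psi_i$ with an $i$-independent constant.

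**Convergence and the main obstacle.** To produce the solution on all of $\mathcal{D}$ I compare $\psi_i$ and $\psi_{i+k}$ on the common region $\mathcal{D}_i$. Their difference $\psi_{i,k}=\psi_i - \psi_{i+k}$ solves the homogeneous equation in $\mathcal{D}_i$ with zero data on $\Sigma_0^i$ and zero boundary value on $\mathcal{B}_0$, but \emph{nonzero} boundary value on $\mathcal{B}_i$ equal to the trace of $\psi_{i+k}$. Running the energy estimate for $\psi_{i,k}$, the flux through $\mathcal{B}_i$ no longer vanishes and must be estimated by the trace of $\psi_{i+k}$ there; the key point — exactly as in the convergence argument of Proposition \ref{sela} — is that the stronger weighted norm $H^{2,\tilde s}_{AdS}$ with $s<\tilde s<\sqrt{9-4\alpha}$ is propagated uniformly, so that restricting to $\{r\geq 2^{i-1}R\}$ gains a factor $(2^{i-1}R)^{s-\tilde s}\to 0$. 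This shows $(\psi_i)$ is Cauchy in the energy norm on each fixed $\mathcal{D}_i$, with limit $\psi$, and the uniform bound (\ref{mainestt}) is inherited in the limit by lower semicontinuity of the norms under weak convergence. I expect the main obstacle to be precisely this boundary-flux control in the Cauchy estimate: one must verify that the energy flux through $\mathcal{B}_i$ is genuinely controlled by the \emph{improved} weighted norm rather than merely the energy norm, since the energy flux through a timelike hypersurface of constant $r$ does not have a favorable sign and only closes because the extra $r$-weight furnished by (\ref{gos}) supplies the decaying factor. A secondary technical point is the temporal continuity statements $\psi\in C([0,T],H^{2,s}_{AdS})$ etc., which follow from the uniform estimates together with a standard weak-continuity-plus-norm-convergence argument once strong convergence in the energy norm has been established.
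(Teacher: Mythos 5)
Your overall architecture coincides with the paper's: uniform-in-$i$ $\partial_t$-energy estimates with the Hardy inequality supplying coercivity under the Breitenlohner--Freedman bound, one commutation with $\partial_t$, the elliptic estimates of section \ref{elliptic} to recover the spatial $H^{2,s}_{AdS}$-norm, and a Cauchy-sequence argument in $i$ in which the flux through $\mathcal{B}_i$ is beaten by the $(2^{i-1}R)^{s-\tilde{s}}$ gain from the improved weight. You also correctly isolate the boundary-flux term in the difference estimate as the crux.

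However, there is a genuine gap at the very start: you take $(u_n,v_n,w_n)$ to be \emph{smooth and compactly supported} with $u_n$ solving (\ref{dk}), and accordingly impose the homogeneous Dirichlet condition $\psi_i=0$ on $\mathcal{B}_i$. These two requirements are incompatible. If $v_n,w_n$ are compactly supported, the solution $u_n$ of the elliptic problem (\ref{dk}) is \emph{not} compactly supported (it merely decays in $H^{2,s}_{AdS}$); conversely, if you instead prescribed a compactly supported $u_n$ freely, the induced $w_n = $ ``$\partial_t^2\psi|_{\Sigma_0}$'' would in general fail to lie in $H^{0,-2}_{AdS}$, which is precisely why the data are constructed the way they are in section \ref{idata}. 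Consequently $u_n$ does not vanish at $r=2^iR$, and the finite problem with $\psi_i=0$ on $\mathcal{B}_i$ violates the zeroth-order corner compatibility condition at $\Sigma_0\cap\mathcal{B}_i$: there is then no solution in $C([0,T],H^{2,s}_{AdS}(\Sigma^i_\tau))$, and every subsequent step of your argument (the vanishing of the $\mathcal{B}_i$-flux, the Hardy inequality without boundary contribution, and the application of the elliptic estimate with trivial trace) is applied to an object that does not have the claimed regularity. The paper's resolution is to extend $u_n$ to $\mathcal{D}$ by $\partial_t$-invariance and impose $\psi_i=u_n|_{\mathcal{B}_i}$ on $\mathcal{B}_i$, which is compatible at the corner; the price is a string of extra boundary terms --- $\|u_n\|^2_{H^{1,0}_{AdS}(\Sigma_0\cap\{r\geq 2^{i-1}R\})}$ in the basic energy/Hardy estimate (\ref{bases}), the $u_n$-term in the elliptic estimate (\ref{impr1}) via Proposition \ref{trela} (whose whole purpose is the \emph{nontrivial} boundary trace, not the trivial one you invoke it for), and the $u_n$-contribution to the $\mathcal{B}_i$-flux in the difference estimate --- all of which are then shown to vanish as $i\to\infty$ because $u_n\in H^{2,s}_{AdS}(\Sigma_0)$ by (\ref{dataes}). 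Your proposal needs this modification; as written the limiting scheme does not get off the ground.
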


\begin{theorem} \label{mtounique}
Under the assumptions of Theorem \ref{mto}, the solution $\psi$ is unique within the class of functions in $C\left(\left[0,T\right], H_{AdS}^{1,0} \left(\Sigma_\tau\right) \right)$.
\end{theorem}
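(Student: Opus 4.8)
The plan is to prove uniqueness by an energy estimate. Suppose $\psi_1,\psi_2$ are two solutions in $C\left(\left[0,T\right], H_{AdS}^{1,0}\left(\Sigma_\tau\right)\right)$ to the problem $\mathcal{P}$ with the same initial data. Then the difference $\phi = \psi_1 - \psi_2$ satisfies $\Box_g \phi - \frac{\alpha}{l^2}\phi = 0$ with vanishing initial data $\phi|_{\Sigma_0} = 0$, $\partial_t \phi|_{\Sigma_0} = 0$, trivial boundary condition on $\mathcal{B}_0$, and the Dirichlet condition $\phi = 0$ on $\mathcal{I}$. The goal is to show $\phi \equiv 0$ using only the regularity $\phi \in C\left(\left[0,T\right], H_{AdS}^{1,0}\right)$ and $\partial_t\phi \in C\left(\left[0,T\right], H_{AdS}^{0,-2}\right)$, i.e.\ exactly the $\partial_t$-energy controlled by the standard energy estimate. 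The key point flagged in the introduction is that the non-uniqueness of Bachelot--Breitenlohner--Freedman comes precisely from solutions whose $\partial_t$-energy flux through $\mathcal{I}$ is infinite; since here we work within the energy class, that flux is finite and the boundary term at infinity in the energy identity has a good sign (or vanishes).

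First I would write down the energy-momentum tensor $\mathbb{T}\left[\phi\right]$ for the massive wave equation (equation~(\ref{emt})) and contract with the approximate Killing field $\partial_t$ to form the energy current $J^a = \mathbb{T}\left[\phi\right]^{ab} (\partial_t)_b$. Integrating the divergence identity $\nabla_a J^a = \mathbb{T}\left[\phi\right]^{ab}\,{}^{(\partial_t)}\pi_{ab}$ over the truncated region $\mathcal{D}_i \cap \{0 \leq t \leq \tau\}$ and applying the divergence theorem gives the fundamental identity: the flux through $\Sigma_\tau^i$ equals the flux through $\Sigma_0^i$ (which is zero), minus the flux through the timelike boundary $\mathcal{B}_i$, plus the spacetime bulk error from the deformation tensor. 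The flux through $\Sigma_\tau^i$ is coercive and controls $\|\phi\|^2_{H^{1,0}_{AdS}(\Sigma_\tau^i)} + \|\partial_t\phi\|^2_{H^{0,-2}_{AdS}(\Sigma_\tau^i)}$ up to the constant $C$ from the Remark in section~\ref{norms}. The bulk error is controlled using the decay estimate on ${}^{(\partial_t)}\pi$ from section~\ref{approK}, and Cauchy--Schwarz together with the Hardy inequality (Lemma~\ref{fila}) to bound it by $C\int_0^\tau$ of the energy on $\Sigma_t^i$. Applying Gr\"onwall then yields that the energy on $\Sigma_\tau^i$ is bounded by the boundary flux through $\mathcal{B}_i$ times $e^{C\tau}$.

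The main obstacle is handling the flux through the timelike boundary $\mathcal{B}_i$ and taking $i \to \infty$. Unlike the existence argument, here the boundary is only an auxiliary cutoff and one cannot impose $\phi = 0$ on $\mathcal{B}_i$; instead one must show that the $\mathcal{B}_i$-flux tends to zero as $i\to\infty$. The flux of $J^a$ through $\mathcal{B}_i$ involves $\mathbb{T}\left[\phi\right]\left(\partial_t, n_{\mathcal{B}_i}\right)$ integrated against the induced volume form; using the normal $n_{\mathcal{B}_i} = \left(\frac{r}{l}+\tilde b\right)\partial_r + \cdots$ from section~\ref{bndsec}, the dominant term is the cross term $\partial_t\phi\,\partial_r\phi$ weighted by $r$-powers. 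I would estimate the time-integrated flux by $\int_0^\tau \int_{S^2_{t,r_i}}$ of a product of energy densities and show, using the finiteness of the spacetime integral $\int_0^T \|\phi\|^2_{H^{1,0}_{AdS}(\Sigma_t)}\,dt < \infty$ that follows from $\phi \in C\left([0,T], H^{1,0}_{AdS}\right)$, that the flux through $\mathcal{B}_{r_i}$ must go to zero along a subsequence $r_i \to \infty$ (the standard argument: if $\int_R^\infty f(r)\,dr < \infty$ then $\liminf_{r\to\infty} r f(r) = 0$). Crucially, the \emph{sign} of the $\mathcal{B}_i$-flux must be controlled: because $\partial_t$ is timelike and causal near infinity and $\mathcal{B}_i$ is timelike, $\mathbb{T}\left[\phi\right]\left(\partial_t, n_{\mathcal{B}_i}\right)$ has a definite (good) sign modulo the mass term and lower-order corrections, so one only needs the flux to be \emph{small}, not to vanish identically. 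Once this subsequential limit of the $\mathcal{B}_i$-flux is shown to vanish, the Gr\"onwall bound forces the energy on each $\Sigma_\tau^i$ to vanish, hence $\phi \equiv 0$ on all of $\mathcal{D}$, completing the proof. I expect the most delicate point to be verifying that the mass term $\frac{\alpha}{l^2}\phi^2$, which for $\alpha > 0$ enters with an unfavorable sign in the flux, is nonetheless controlled by the good gradient terms via the Hardy inequality of Lemma~\ref{fila}, precisely as in the proof of Proposition~\ref{sela}, and that this control is uniform up to the boundary $\mathcal{I}$ under the Breitenlohner--Freedman bound $\alpha < \frac{9}{4}$.
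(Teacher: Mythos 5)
Your overall skeleton --- energy identity for the difference with the multiplier $\partial_t$, the Hardy inequality of Lemma \ref{fila} to absorb the mass term, Gr\"onwall, and the need to kill the flux through the auxiliary boundary $\mathcal{B}_i$ as $i\to\infty$ --- matches the paper's, and you correctly identify the $\mathcal{B}_i$-flux as the crux. However, your treatment of that flux has two genuine problems. First, the sign claim is false: $n_{\mathcal{B}_i}$ is the normal to a \emph{timelike} hypersurface and is therefore spacelike, so $\mathbb{T}\left[\phi\right]\left(\partial_t,n_{\mathcal{B}_i}\right)$ has no definite sign; after accounting for the area element $\sim r^3$ and $n_{\mathcal{B}_i}\sim\frac{r}{l}\partial_r$, its dominant contribution is the signless cross term $\sim r^4\,\partial_t\phi\,\partial_r\phi$ per unit $dt\,d\omega$. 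Second, and more seriously, the pigeonhole argument does not close in the class $C\left(\left[0,T\right],H^{1,0}_{AdS}\right)$. By Cauchy--Schwarz, $r^4|\partial_t\phi||\partial_r\phi|\le\frac{1}{2}\left(r^{5}\left(\partial_r\phi\right)^2+r^{3}\left(\partial_t\phi\right)^2\right)$; the finiteness of $\int_0^T\|\phi\|^2_{H^{1,0}_{AdS}}\,dt$ does give $\int dt\,d\omega\,r^5\left(\partial_r\phi\right)^2\to0$ along a subsequence, but for the other factor one would need $\int r^{2}\left(\partial_t\phi\right)^2\,dr\,d\omega<\infty$, i.e.\ $\partial_t\phi\in H^{0,0}_{AdS}$, whereas the uncommuted energy only controls $\partial_t\phi\in H^{0,-2}_{AdS}$, i.e.\ $\int\left(\partial_t\phi\right)^2\,dr\,d\omega$. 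This is precisely the obstruction the paper spells out in the Remark following (\ref{hc}): the $r$-weights in the boundary term are too strong for the first-order energy of the difference.

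The paper's proof therefore does not try to control the $\mathcal{B}_i$-flux by the energy of the difference alone. It uses that the two solutions individually lie in the spaces of Theorem \ref{mto} (in particular $\psi_k\in H^{2,0}_{AdS}$ and $\partial_t\psi_k\in H^{1,0}_{AdS}$), converts the $\mathcal{B}_i$-term into a bulk integral over the annulus $\{2^{i-1}R\le r\le 2^iR\}$ via the cutoff multiplier $\tilde{\chi}_{i-1}\partial_t$ as in (\ref{bndtermcomp}), and bounds it by tails such as $\sum_k\|\psi_k\|^2_{H^{2,0}_{AdS}\left(\Sigma_\tau\cap\{r\ge2^{i-1}R\}\right)}$ of convergent integrals, which tend to zero as $i\to\infty$ without any subsequence extraction. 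To repair your argument you would either have to import this higher regularity of $\psi_1,\psi_2$, or at least establish $\partial_t\phi\in L^2\left(\left[0,T\right],H^{0,0}_{AdS}\right)$, which is not part of the stated uniqueness class and in the paper is obtained only from the $\partial_t$-commuted estimate (\ref{imes}).
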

We will prove Theorem \ref{mto} by proving the same statement for any member of the sequence of approximating data, $\left(u_n,v_n,w_n\right)\rightarrow \left(u,v,w\right)$, and then argue by density.
\begin{theorem} \label{mto2}
The statement of Theorem \ref{mto} is true for any member of the approximating sequence $\left(u_n,v_n,w_n\right) \rightarrow \left(u,v,w\right)$. In particular, the constant $C_s$ in (\ref{mainestt}) does not depend on $n$.
\end{theorem}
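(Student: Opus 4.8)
The plan is to follow the limiting scheme advertised in the introduction. For a fixed member $(u_n,v_n,w_n)$ of the approximating sequence I would first solve the finite initial boundary value problem $\mathcal{P}_i$ obtained from $\mathcal{P}$ by replacing the ideal boundary $\mathcal{I}$ with the timelike cylinder $\mathcal{B}_i$ at $r=r_i=2^iR$ and imposing $\psi_i=0$ there. On the compact region $\mathcal{D}_i$ all $r$-weights in the metric and in the operator are bounded, so this is a standard linear hyperbolic initial-boundary-value problem with Dirichlet data on the two timelike faces $\mathcal{B}_0$ and $\mathcal{B}_i$; for the smooth data at hand, standard energy/Galerkin theory produces a unique solution $\psi_i$, smooth in the interior and continuous in $t$ with values in $H^2(\Sigma^i_\tau)$. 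The minor discrepancy that $u_n$ need not vanish exactly on $\mathcal{B}_i$ is absorbed into an inhomogeneous trace, exactly as in Proposition \ref{trela}.

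The heart of the matter is to upgrade these solutions to estimates that are uniform in both $i$ and $n$. I would run the energy estimate associated with the approximate Killing field $\partial_t$: contracting the energy-momentum tensor $\mathbb{T}[\psi_i]$ with $\partial_t$ and the normal of $\Sigma_t$ controls $\|\psi_i\|_{H^{1,0}_{AdS}}+\|\partial_t\psi_i\|_{H^{0,-2}_{AdS}}$, where the bulk error is governed by the deformation tensor of $\partial_t$ from section \ref{approK} and hence only costs a Gr\"onwall factor $e^{C_st}$. Crucially, the Dirichlet condition $\psi_i=0$ on $\mathcal{B}_i$ forces $\partial_t\psi_i$ and the angular derivatives to vanish there, so the flux $\mathbb{T}[\psi_i](\partial_t,n_{\mathcal{B}_i})$ through the receding boundary contributes nothing, which is what keeps the estimate uniform in $i$. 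Commuting the equation with $\partial_t$ (the commutator again controlled by the deformation tensor) propagates one more time derivative, giving control of $\|\partial_t\psi_i\|_{H^{1,0}_{AdS}}+\|\partial_t\partial_t\psi_i\|_{H^{0,-2}_{AdS}}$. Finally I would rewrite the wave equation as the elliptic equation $L\psi_i=\mathcal{F}_i$ with $\mathcal{F}_i=-g^{tt}\partial_t\partial_t\psi_i+(\text{lower-order }t\text{-derivative terms})$; since $g^{tt}=\mathcal{O}(r^{-2})$, the established $H^{0,-2}_{AdS}$-control of $\partial_t\partial_t\psi_i$ places $\mathcal{F}_i$ in $H^{0,s}_{AdS}$ precisely for $s\le 2$, and the uniform weighted estimate (\ref{gos}) of Proposition \ref{sela} then delivers the full $\|\psi_i\|_{H^{2,s}_{AdS}}$-bound of (\ref{mainestt}) with a constant depending only on the background and $s$. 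The admissible range $s<\sqrt{9-4\alpha}$ is inherited from the Hardy inequalities of Lemma \ref{fila}, which is where the Breitenlohner-Freedman threshold enters.

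With these uniform bounds in hand I would establish convergence of $\{\psi_i\}$ in the energy topology by comparing $\psi_i$ and $\psi_{i+k}$ on the common domain $\mathcal{D}_i$. Their difference solves the homogeneous equation with vanishing data on $\Sigma_0$ and on $\mathcal{B}_0$, the only source being its trace on $\mathcal{B}_i$, which equals the trace of $\psi_{i+k}$. An energy estimate for the difference, or equivalently the cut-off argument used in the proof of Proposition \ref{sela}, bounds $\|\psi_i-\psi_{i+k}\|_{C([0,T],H^{1,0}_{AdS})}$ by $\|\psi_{i+k}\|_{H^{2,s}_{AdS}(\{r\ge 2^{i-1}R\})}$; choosing $\tilde{s}$ with $s<\tilde{s}<\sqrt{9-4\alpha}$ and trading weights produces the decaying factor $(2^{i-1}R)^{s-\tilde{s}}$, so the sequence is Cauchy and converges to a limit $\psi$. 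By weak lower semicontinuity $\psi$ inherits (\ref{mainestt}), solves the wave equation, attains the data, and through its membership in $H^{2,s}_{AdS}$ realises the Dirichlet condition on $\mathcal{I}$ in the sense of Remark \ref{unelliptic}; the continuity-in-$t$ statements then follow from the equation together with the propagated bounds. Since every constant depends only on the background and $s$, and the data enter only through $\|v_n\|^2_{H^{1,0}_{AdS}}+\|w_n\|^2_{H^{0,-2}_{AdS}}$, the constant $C_s$ is manifestly independent of $n$.

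I expect the main obstacle to be the simultaneous management of the $r$-weights when coupling the energy and elliptic estimates near the Breitenlohner-Freedman bound, together with the proof that the flux through the receding boundary $\mathcal{B}_i$ does no harm. One must verify both that the Dirichlet condition kills the leading flux and that the improved estimate with the strictly larger weight $\tilde{s}$ makes the boundary tail genuinely small, uniformly in $i$, and that these two weighted mechanisms remain compatible across the whole range $0\le s<\min(2,\sqrt{9-4\alpha})$.
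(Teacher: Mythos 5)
Your overall strategy --- solve the finite problems $\mathcal{P}_i$, prove $i$-uniform weighted estimates via the $\partial_t$-energy, the commuted energy and the elliptic estimate, then compare $\psi_i$ with $\psi_{i+k}$ and send $i\to\infty$ --- is the paper's. There are, however, two genuine gaps. First, you impose $\psi_i=0$ on $\mathcal{B}_i$, but $u_n$ is \emph{not} compactly supported (only $v_n,w_n$ are), so this boundary condition is incompatible with the initial condition at the corner $\Sigma_0\cap\mathcal{B}_i$, and the resulting solution cannot lie in the claimed regularity class. The paper instead prescribes $\psi_i=u_n|_{\mathcal{B}_i}$ with $u_n$ extended $t$-independently; the $\partial_t$-flux through $\mathcal{B}_i$ still vanishes (since $\partial_t u_n=0$ there and $g(\partial_t,n_{\mathcal{B}_i})=0$), but the Hardy inequality on $\Sigma_t^i$ now produces a boundary term at $r=r_i$, which is why (\ref{bases}) carries the extra term $\| u_n \|^2_{H^{1,0}_{AdS} (\Sigma_0 \cap \{ r \geq 2^{i-1}R \})}$. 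Your parenthetical about absorbing an ``inhomogeneous trace'' gestures at this but is not carried out, and you cannot simultaneously keep the homogeneous condition (to kill boundary terms) and the inhomogeneous one (to restore corner compatibility).

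Second, and more seriously, your convergence argument only makes the sequence Cauchy in $H^{1,0}_{AdS}$ and then appeals to weak lower semicontinuity. That yields the \emph{bound} (\ref{mainestt}) for the limit but not the asserted strong continuity $\psi\in C([0,T],H^{2,s}_{AdS})$, $\partial_t\psi\in C([0,T],H^{1,0}_{AdS})$, $\partial_t\partial_t\psi\in C([0,T],H^{0,-2}_{AdS})$. The paper needs the full difference estimate (\ref{dife}), i.e.\ Cauchyness in $H^{2,s}\times H^{1,0}\times H^{0,-2}$, and obtaining it requires two further layers absent from your sketch: (a) the flux $\int_{\mathcal{B}_i}\mathbb{T}[\tilde{\psi}_{i,k}](\partial_t,n_{\mathcal{B}_i})$ must be converted into a bulk integral over $\mathcal{D}_i\setminus\mathcal{D}_{i-1}$ with the cut-off multiplier $\tilde{\chi}_{i-1}\partial_t$, and the dangerous term $r^4\,\partial_t\tilde{\psi}_{i,k}\,\partial_r\tilde{\psi}_{i,k}$ is controllable only by combining the \emph{commuted} energy (which upgrades $\int(\partial_t\psi)^2\,dr\,d\omega$ to $\int r^2(\partial_t\psi)^2\,dr\,d\omega$) with the elliptically improved $H^{2,s}$ norm --- this is exactly where the smallness factor $(2^{i-1}R)^{-s/2}$ comes from, and it is not produced by the uncommuted energy estimate you invoke; (b) one must commute \emph{twice} with $\partial_t$ to obtain (\ref{hios}), which controls $\|\partial_t\psi_i\|_{H^{2,s}_{AdS}}$ uniformly in $i$ and is finite for fixed $n$ only because $v_n,w_n$ have compact support (it is not uniform in $n$, which is harmless since the $n$-limit is taken afterwards by density). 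You correctly flag the boundary tail as the main obstacle, but its resolution is precisely the missing content.
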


Clearly, Theorem \ref{mto2} implies Theorem \ref{mto}. The former will in turn be proven as follows. Fix $n$. We extend the function $u_n$ from $\Sigma_0$ to $\mathcal{D}$ by imposing that it is constant along the integral curves of $\partial_t$. We will call this function also $u_n$. Consider then the sequence of boundary initial value problems
\begin{equation} \label{probPi}
\mathcal{P}_i = \left\{
\begin{array}{rl} \Box_g \psi_i - \frac{\alpha}{l^2} \psi_i = 0 & \text{in } \mathcal{D}_i \\
\psi_i=u_n|_{\mathcal{B}_i} & \text{on } \mathcal{B}_i  \ \ \text{and}  \ \ \psi_i = 0 \ \ \text{on} \ \ {\mathcal{B}_0}\\
\psi_i|_{\Sigma_0} = u_n  & \text{ , \ \   } \partial_t \psi_i |_{\Sigma_0} = v_n 
\end{array} \right.
\end{equation}
with $i$ so large that $v_n, w_n$ are not supported for $r \geq 2^iR$. For $\mathcal{P}_i$ we prove
\begin{proposition} \label{mnthm}
The (smooth, unique) solution to each $\mathcal{P}_i$ satisfies the estimates
\begin{align}  \label{bios}
\| \psi_i \|^2_{H^{2,s}_{AdS} \left(\Sigma_{t}^i\right)} + \| \partial_ t \psi_i \|^2_{H^{1,0}_{AdS} \left(\Sigma_{t}^i\right)} + \| \partial_t \partial_t \psi_i \|^2_{H^{0,-2}_{AdS} \left(\Sigma_{t}^i\right)} \nonumber \\ 
\leq  e^{C_s t} \left(\|v_n\|^2_{H^{1,0}_{AdS} \left(\Sigma_0\right)} + \|w_n\|^2_{H^{0,-2}_{AdS} \left(\Sigma_0\right)}\right)   \, .  
\end{align}
and\footnote{Note that the right hand side of (\ref{hios}) may go to infinity as $n\rightarrow \infty$. However, for fixed $n$ it is bounded since $v_n$ and $w_n$ are of compact support.}
\begin{align}  \label{hios}
 \| \partial_ t \psi_i \|^2_{H^{2,s}_{AdS} \left(\Sigma_{t}^i\right)} + \| \partial_t \partial_t \psi_i \|^2_{H^{1,0}_{AdS} \left(\Sigma_{t}^i\right)} + \| \partial_t \partial_t \partial_t \psi_i \|^2_{H^{0,-2}_{AdS} \left(\Sigma_{t}^i\right)} \nonumber \\
  \leq C e^{C_s t} \left[\|w_n\|^2_{H^{1,0}_{AdS} \left(\Sigma_{0}\right)} + \|r^2 L v_n\|^2_{H^{0,-2}_{AdS} \left(\Sigma_{0}\right)} +  \|v_n\|_{H^{1,0}_{AdS} \left(\Sigma_{0}\right)} \right] 
\end{align}
in $\mathcal{D}_i$. For the difference of two solutions, $\tilde{\psi}_{i,k} = \psi_{i} - \psi_{i+k}$, we have for any $k>0$ and any $0 \leq s<\tilde{s}<\sqrt{9-4\alpha}$ the following estimate in $\mathcal{D}_i$:
\begin{align}  \label{dife}
\| \tilde{\psi}_{i,k}  \|^2_{H^{2,s}_{AdS} \left(\Sigma_{t}^i\right)}  + \| \partial_t \tilde{\psi}_{i,k}  \|^2_{H^{1,0}_{AdS} \left(\Sigma_{t}^i\right)} + \| \partial_t \partial_t \tilde{\psi}_{i,k}  \|^2_{H^{0,-2}_{AdS} \left(\Sigma_{t}^i\right)} \nonumber \\ 
\leq C \left(T+1\right) e^{C_{\tilde{s}} T} \left(2^{i-1}R\right)^{\frac{s-{\tilde{s}}}{2}} \Big[\|w_n\|^2_{H^{1,0}_{AdS} \left(\Sigma_{0}\right)} + \|r^2 L v_n\|^2_{H^{0,-2}_{AdS} \left(\Sigma_{0}\right)} \nonumber \\ +  \|v_n\|_{H^{1,0}_{AdS} \left(\Sigma_{0}\right)} \Big]   \, .
\end{align}
\end{proposition}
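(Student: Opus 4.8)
The plan is to prove all three estimates by the $\partial_t$-energy method, exploiting that $\partial_t$ is approximately Killing (section \ref{approK}) and, crucially, tangent to both timelike boundaries $\mathcal{B}_0$ and $\mathcal{B}_i$. I would work with the energy-momentum tensor
\[
\mathbb{T}_{ab}[\psi]=\partial_a\psi\,\partial_b\psi-\tfrac12 g_{ab}\Big(\partial^c\psi\,\partial_c\psi+\tfrac{\alpha}{l^2}\psi^2\Big),
\]
and the current $J_a=\mathbb{T}_{ab}[\psi]\,(\partial_t)^b$. Integrating $\nabla^a J_a={}^{(\partial_t)}\pi^{ab}\,\mathbb{T}_{ab}[\psi]$ over $\mathcal{D}_i\cap\{0\le t\le\tau\}$, the deformation-tensor bounds of section \ref{approK} control the bulk error by a constant multiple of the energy density. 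The key structural simplification is that since $\psi_i$ equals the $t$-independent function $u_n$ on $\mathcal{B}_i$ and vanishes on $\mathcal{B}_0$, one has $\partial_t\psi_i=0$ on both boundaries, and $g(\partial_t,n)=0$ there because $\partial_t$ is tangent; hence \emph{both} boundary fluxes vanish and the identity closes with only the initial slice and the bulk error. Coercivity of the flux through $\Sigma_\tau$ — the only subtle point at this stage — follows from the Hardy inequality of Lemma \ref{fila} with $s=0$, which absorbs the \emph{negative} mass term $-\tfrac{\alpha}{l^2}\psi^2$ into $g^{rr}(\partial_r\psi)^2$ precisely when $\tfrac{4\alpha}{9}<1$, i.e. for $\alpha<\tfrac94$. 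Gr\"onwall then yields the base $T$-energy $\|\psi_i\|_{H^{1,0}_{AdS}}^2+\|\partial_t\psi_i\|_{H^{0,-2}_{AdS}}^2\le e^{C t}(\cdots)$.

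Next I would build the hierarchy by commuting the equation with $\partial_t$. Because $\partial_t$ preserves the (homogeneous, or $t$-independent Dirichlet) boundary structure and $[\Box_g,\partial_t]$ is a first-order operator with coefficients decaying like the $\partial_t$-derivatives of the metric (e.g. $\partial_t g_{tt}=\mathcal{O}(1/r)$), the functions $\partial_t\psi_i$ and $\partial_t^2\psi_i$ satisfy the same type of energy estimate with a controlled lower-order source. This produces control of $\|\partial_t\psi_i\|_{H^{1,0}_{AdS}},\|\partial_t^2\psi_i\|_{H^{0,-2}_{AdS}}$ (giving \ref{bios}) and, one level higher, of $\|\partial_t^2\psi_i\|_{H^{1,0}_{AdS}},\|\partial_t^3\psi_i\|_{H^{0,-2}_{AdS}}$ (giving \ref{hios}). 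I would then upgrade time-derivative control to full spatial regularity via Propositions \ref{sela}--\ref{trela}: writing the wave equation as $L\psi_i=-(\text{terms involving }\partial_t^2\psi_i,\partial_t\partial_j\psi_i,\partial_t\psi_i)$, the elliptic estimate bounds $\|\psi_i\|_{H^{2,s}_{AdS}}$ by $\|L\psi_i\|_{H^{0,s}_{AdS}}$, and since the leading term is $g^{tt}\partial_t^2\psi_i\sim r^{-2}\partial_t^2\psi_i$, the weight is forced to $s\le 2$ — exactly the restriction $s<\min(2,\sqrt{9-4\alpha})$. For the right-hand sides, the initial energies are $\|v_n\|_{H^{1,0}_{AdS}}^2+\|w_n\|_{H^{0,-2}_{AdS}}^2$ (using $\partial_t\psi|_0=v_n$, $\partial_t^2\psi|_0=w_n$, with the $u_n$-data reabsorbed via \ref{dataes}); for \ref{hios} one additionally reads off $\partial_t^3\psi|_0$ from the once-commuted equation at $t=0$, where $g^{tt}\partial_t^3\psi|_0\approx -Lv_n$ gives the term $\|r^2 Lv_n\|_{H^{0,-2}_{AdS}}$.

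The difference estimate \ref{dife} is the heart of the limiting argument. Here $\tilde\psi_{i,k}=\psi_i-\psi_{i+k}$ solves the homogeneous equation in $\mathcal{D}_i$ with vanishing initial data and vanishing data on $\mathcal{B}_0$, but with the \emph{nonzero} mismatch $u_n-\psi_{i+k}$ on $\mathcal{B}_i$, since $\mathcal{B}_i$ is interior to $\mathcal{D}_{i+k}$. I would convert this inhomogeneous boundary problem into an interior-source problem by subtracting the cut-off $F=-(1-\chi_{i-1})(\psi_{i+k}-u_n)$, with $\chi_{i-1}$ from Lemma \ref{dens}, so that $\Phi=\tilde\psi_{i,k}-F$ has homogeneous Dirichlet data on both boundaries and zero initial data (here one takes $i$ large enough that $v_n,w_n$ are unsupported for $r\ge 2^{i-1}R$). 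Then $\Phi$ solves the wave equation with source $G=(\Box_g-\tfrac{\alpha}{l^2})F$ supported in $\{r\ge 2^{i-1}R\}$, consisting of commutator terms localized in the annulus $\{2^{i-1}R\le r\le 2^iR\}$ together with the $(1-\chi_{i-1})$-localized contribution coming from the fact that the $t$-independent $u_n$ does not solve the equation. Running the full hierarchy of the previous paragraph on $\Phi$ (whose boundary fluxes again vanish), I would estimate $\|G\|$ and $\|F\|$ over the far region using the uniform bound \ref{hios} for $\psi_{i+k}$ and the data bound for $u_n$, but measured in the \emph{stronger} weight $\tilde s$; converting back via $\|\cdot\|_{H^{\cdot,s}_{AdS}(\{r\ge 2^{i-1}R\})}\le (2^{i-1}R)^{\frac{s-\tilde s}{2}}\|\cdot\|_{H^{\cdot,\tilde s}_{AdS}}$ produces exactly the decaying factor in \ref{dife}, while Gr\"onwall supplies $(T+1)e^{C_{\tilde s}T}$.

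The main obstacle is this last estimate. The coercivity via Hardy is conceptually central (it is where $\alpha<\tfrac94$ enters) but is essentially a direct application of Lemma \ref{fila}; the genuinely delicate work is keeping every estimate uniform in $i$ while tracking the $r$-weights so precisely that both the boundary mismatch and the cut-off source are controlled by a quantity vanishing as $i\to\infty$. In particular, closing the hierarchy for $\Phi$ requires differentiating the source $G$ in $t$, which is why the higher uniform estimate \ref{hios} — rather than merely \ref{bios} — must feed into \ref{dife}, and why the right-hand side of \ref{dife} inherits the norms $\|w_n\|_{H^{1,0}_{AdS}}$, $\|r^2 Lv_n\|_{H^{0,-2}_{AdS}}$ and $\|v_n\|_{H^{1,0}_{AdS}}$ appearing in \ref{hios}. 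A secondary technical point that must be checked carefully throughout is that commutation with $\partial_t$ genuinely preserves the boundary conditions and that the deformation-tensor and commutator errors remain integrable against the chosen weights on each slice.
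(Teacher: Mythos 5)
Your overall architecture coincides with the paper's for the first two estimates: $\partial_t$-energy identity with vanishing fluxes on $\mathcal{B}_0,\mathcal{B}_i$ (since $\partial_t$ is tangent there and $\partial_t\psi_i=0$ on both), Hardy inequality for coercivity, commutation with $\partial_t$, recovery of spatial derivatives via Propositions \ref{sela}--\ref{trela}, and Gronwall; you also correctly identify the origin of the $s<\min(2,\sqrt{9-4\alpha})$ restriction and of the $\|r^2Lv_n\|_{H^{0,-2}_{AdS}}$ data term. Two corrections at this stage. First, $[\Box_g,\partial_t]$ is \emph{not} a first-order operator: since $\partial_t$ is only approximately Killing, $\mathcal{C}[\partial_t\psi]$ contains the second-order term $-2\,{}^{(\partial_t)}\pi^{\alpha\beta}\nabla_\alpha\nabla_\beta\psi$, which is why the commuted energy inequality (\ref{imes}) carries the term $\|\psi_i\|^2_{H^{2,0}_{AdS}}$ and can only be closed \emph{jointly} with the elliptic estimate inside the Gronwall loop, not before it. Second, the Hardy inequality on the truncated slice $\Sigma_t^i$ produces a boundary term $\int_{S^2}r_i^3\psi_i^2\,d\omega$ at $r=r_i$ (because $\psi_i=u_n\neq0$ there), which is the source of the extra term $\|u_n\|^2_{H^{1,0}_{AdS}(\Sigma_0\cap\{r\geq 2^{i-1}R\})}$ in (\ref{bases}); your claim that the identity ``closes with only the initial slice and the bulk error'' overlooks this.

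For the difference estimate (\ref{dife}) you take a genuinely different route: you homogenize the boundary mismatch by subtracting the cut-off extension $F=-(1-\chi_{i-1})(\psi_{i+k}-u_n)$ and treat the resulting interior source $G$, whereas the paper keeps the inhomogeneous boundary condition and estimates the flux $\int_{\mathcal{B}_i}\mathbb{T}[\tilde\psi_{i,k}](\partial_t,n_{\mathcal{B}_i})$ directly by applying the energy identity with the localized multiplier $\tilde\chi_{i-1}\partial_t$ in the annulus $\mathcal{D}_i\setminus\mathcal{D}_{i-1}$, which converts the flux into a bulk term that is then bounded by pairing the improved weighted norms of $\psi_i$ and $\psi_{i+k}$ separately. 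Both strategies are legitimate, but yours has a gap exactly at the crux of the proposition: the worst term in $G$ is $g^{rr}\chi'_{i-1}\partial_r(\psi_{i+k}-u_n)\sim r\,\partial_r\psi_{i+k}$, and if you pair it against the quantity controlled by the \emph{base} energy of $\Phi$, namely $\|\partial_t\Phi\|_{H^{0,-2}_{AdS}}$, you are forced to measure $r\,\partial_r\psi_{i+k}$ in $H^{0,2}_{AdS}$ of the annulus, which requires a uniform bound on $\psi_{i+k}$ in a weight \emph{strictly greater than} $2$ to produce a decaying factor --- unavailable once $\sqrt{9-4\alpha}\leq 2$. The estimate only closes if this source is paired against $\|\partial_t\Phi\|_{H^{0,0}_{AdS}}$, which is controlled only by the \emph{commuted} energy, so the entire hierarchy must be run as a single coupled Gronwall system; this is precisely the weight mismatch the paper isolates in the Remark following (\ref{hc}), and your phrase ``converting back via the weight-trading inequality produces exactly the decaying factor'' asserts rather than verifies it. You should make this pairing explicit; as written, a naive implementation of your plan at the base-energy level fails in the range $\frac54\leq\alpha<\frac94$.
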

Assuming the validity of the Proposition, we can prove Theorem \ref{mto2}. 
Indeed, for fixed $n$ the square bracket in (\ref{dife}) is bounded, since $v_n$, $w_n$ are of compact support. Hence, together with the factor multiplying the bracket, the right hand side goes to zero as $i \rightarrow \infty$. It follows that the sequence of solutions $\psi_i$ is Cauchy and converges to a solution of problem $\mathcal{P}$ for the initial data $\left(u_n,v_n,w_n\right)$, thereby establishing Theorem \ref{mto2}.

It remains to prove Proposition \ref{mnthm} and the uniqueness assertion of Theorem \ref{mtounique}. This is carried out in the following section.
\section{Proof of Proposition \ref{mnthm} and Theorem \ref{mtounique} } \label{theproof}
Since the region $\mathcal{D}_i$ is compact, standard theory produces a unique smooth solution. The only difficult part is to derive the three estimates, in particular, the correct $r$-weights appearing in them. We introduce the 
energy momentum tensor of the field $\psi$,
\begin{equation} \label{emt}
\mathbb{T}_{\mu \nu} \left[\psi\right] = \partial_\mu \psi \partial_\nu \psi - \frac{1}{2}g_{\mu \nu} \left(g^{\beta \gamma}\partial_\beta \psi \partial_\gamma \psi - \frac{\alpha}{l^2} \psi^2 \right) \, .
\end{equation}
Since the divergence of $\mathbb{T}_{\mu \nu}\left[\psi\right]$ vanishes for solutions to the wave equation, it implies for any spacetime vectorfield $X$ the energy identity
\begin{align} \label{emtd}
\nabla^\mu \left(\mathbb{T}_{\mu \nu} \left[\psi\right] X^\nu\right) = \mathbb{T}_{\mu \nu} {}^{(X)}\pi^{\mu \nu} \, .
\end{align}
Integrated over the spacetime region $\mathcal{D}_i$ this takes the form
\begin{align} \label{emtid}
\int_{\Sigma^i_\tau} \mathbb{T} \left[\psi\right] \left( X, n_{\Sigma}\right) = \int_{\Sigma^i_0} \mathbb{T}  \left[\psi\right]\left( X, n_{\Sigma_0}\right) + \int_{\mathcal{D}_i} \phantom{}^{(X)}\pi \cdot \mathbb{T}  \left[\psi\right] \nonumber \\
+ \int_{\mathcal{B}_i} \mathbb{T} \left[\psi\right]\left(X,n_{\mathcal{B}_i}\right) - \int_{\mathcal{B}_0} \mathbb{T} \left[\psi\right] \left(X,n_{\mathcal{B}_0}\right) \, .
\end{align}
\begin{lemma}
The solution to $\mathcal{P}_i$ satisfies the estimate
\begin{align} \label{bases}
\|\psi_i\|^2_{H_{AdS}^{1,0}\left(\Sigma_{t}^i \right)} + \int_{\Sigma_{t}^i} \left(\partial_t \psi_i\right)^2 dr d\omega \nonumber \\ \leq \frac{C}{R} \int_0^t d\tau \left[ \|\psi_i\|^2_{H_{AdS}^{1,0}\left(\Sigma_{\tau}^i \right)} + \int_{\Sigma_{\tau}^i} \left(\partial_t \psi_i\right)^2 dr d\omega \right] \nonumber \\ + C\left[\|u_n\|^2_{H_{AdS}^{1,0}\left(\Sigma_{0}^i \right)} + \int_{\Sigma_{0}^i} |v_n|^2 dr d\omega\right] + \| u_n \|^2_{H^{1,0}_{AdS} \left(\Sigma_0 \cap \{ r \geq 2^{i-1}R \} \right)}
\end{align}
for a uniform $C$. The term in the second line can be dropped if $\partial_t$ is Killing.
\end{lemma}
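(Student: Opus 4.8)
The plan is to run the vector-field (energy) method with the approximate Killing multiplier $X=\partial_t$, inserting it into the energy identity (\ref{emtid}). I would work with the flux $E[\psi_i](\tau):=\int_{\Sigma_\tau^i}\mathbb{T}[\psi_i](\partial_t,n_{\Sigma_\tau})\sqrt{g_{\Sigma_\tau}}\,dr\,d\omega$ and organize the whole estimate around two pillars: first, that $E[\psi_i](\tau)$ is \emph{coercive} over the quantity on the left of (\ref{bases}); and second, that every term on the right of (\ref{emtid}) can be bounded by the quantities appearing on the right of (\ref{bases}). Once both are in hand, (\ref{bases}) follows by combining them, and a later Gronwall argument will upgrade it to the exponential bound.

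For coercivity I would compute $\mathbb{T}[\psi_i](\partial_t,n_{\Sigma_\tau})$ to leading order using (II) and $n_{\Sigma_\tau}\approx\frac{l}{r}\partial_t$, and integrate against $\sqrt{g_{\Sigma_\tau}}\,dr\,d\omega\approx l r\,dr\,d\omega$. Up to fixed constants the density reproduces the good terms $r^4(\partial_r\psi_i)^2$, $r^2|\slashed{\nabla}\psi_i|^2$ and $(\partial_t\psi_i)^2$, but the mass term appears with the wrong sign as $-\frac{\alpha r^2}{2l^2}\psi_i^2$. This is precisely where the Breitenlohner--Freedman bound enters: by the Hardy inequality of Lemma \ref{fila} (with $s=0$) one has $\int r^2\psi_i^2\leq\frac{4}{9}\int r^4(\partial_r\psi_i)^2$, so the negative term consumes only the fraction $\frac{4\alpha}{9}<1$ of the $r^4(\partial_r\psi_i)^2$ term, leaving a strict surplus that also recovers a positive $r^2\psi_i^2$ contribution and hence the full $H^{1,0}_{AdS}$ norm (the subleading $\mathcal{O}(1/r)$ errors in the metric and in $n_{\Sigma_\tau}$ being absorbed by this surplus for $R$ large). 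Since $\psi_i=u_n$ does not vanish on $\mathcal{B}_i$, the integration by parts behind Lemma \ref{fila} now produces a boundary term at $r=2^iR$; estimating it by a trace inequality and using that $u_n$ is $t$-independent converts the trace at $r=2^iR$ into exactly the term $\|u_n\|^2_{H^{1,0}_{AdS}(\Sigma_0\cap\{r\geq 2^{i-1}R\})}$.

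For the right-hand side of (\ref{emtid}), the key simplification is that both timelike-boundary fluxes vanish identically: $\partial_t$ is tangent to $\mathcal{B}_0$ and to $\mathcal{B}_i$, so $g(\partial_t,n_{\mathcal{B}})=0$ and therefore $\mathbb{T}[\psi_i](\partial_t,n_{\mathcal{B}})=(\partial_t\psi_i)(n_{\mathcal{B}}\psi_i)$; moreover $\partial_t\psi_i=0$ on both boundaries (it equals $0$ on $\mathcal{B}_0$ and the $t$-independent $u_n$ on $\mathcal{B}_i$). The initial flux $E[\psi_i](0)$ is bounded by $C[\|u_n\|^2_{H^{1,0}_{AdS}(\Sigma_0^i)}+\int_{\Sigma_0^i}v_n^2\,dr\,d\omega]$, exactly as in the Remark of section \ref{norms}. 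For the bulk error $\int_{\mathcal{D}_i}{}^{(\partial_t)}\pi\cdot\mathbb{T}[\psi_i]$ I would insert the decay rates for ${}^{(\partial_t)}\pi$ from section \ref{approK} and verify term by term that each contraction ${}^{(\partial_t)}\pi^{\mu\nu}\mathbb{T}_{\mu\nu}\sqrt{g}$ is pointwise dominated by $\frac{C}{R}$ times the $H^{1,0}_{AdS}$ energy density (the surplus inverse power of $r\geq R$ furnishing the smallness); writing the spacetime integral as $\int_0^t d\tau$ of spatial integrals then yields the $\frac{C}{R}\int_0^t$ term. When $\partial_t$ is Killing this deformation tensor vanishes, so that term drops, as claimed.

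The main obstacle is the coercivity step: one must check that the negative mass contribution is \emph{strictly} absorbable, which is exactly where $\alpha<\frac{9}{4}$ is sharp, and simultaneously keep track of the outer-boundary term generated by the Hardy inequality, converting the trace of $u_n$ at $r=2^iR$ into the stated norm over $\{r\geq 2^{i-1}R\}$. By contrast, the vanishing of the two timelike fluxes is immediate from the tangency of $\partial_t$, and the bulk estimate is routine once the weighted bounds of section \ref{approK} are used.
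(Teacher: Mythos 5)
Your proposal is correct and follows essentially the same route as the paper: the energy identity (\ref{emtid}) with $X=\partial_t$, vanishing of both timelike-boundary fluxes from the Dirichlet/$t$-independence of the boundary data, absorption of the wrong-signed mass term via the Hardy inequality of Lemma \ref{fila} (whose integration by parts produces the boundary term at $r=2^iR$ that is converted into $\| u_n \|^2_{H^{1,0}_{AdS} \left(\Sigma_0 \cap \{ r \geq 2^{i-1}R \} \right)}$), and the decay of ${}^{(\partial_t)}\pi$ yielding the $\frac{C}{R}\int_0^t$ bulk contribution.
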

\begin{proof}
Apply identity (\ref{emtid}) with $X=\partial_t$. The boundary term on both $\mathcal{B}_0$ and $\mathcal{B}_i$ vanishes in view of the Dirichlet boundary conditions. For the wrong-signed zeroth order term on $\Sigma_\tau^i$ we invoke the Hardy inequality (cf.~Lemma \ref{fila})
\begin{align} \label{hardy}
\int_{R}^{2^i R}\int_{S^2} dr \, d\omega \, \psi_i^2 \sqrt{g|_{\Sigma_{t,i} }} \, g\left(-\partial_t, n_{\Sigma_{t,i}}\right) = \int_{R}^{2^i R}\int_{S^2} dr \, d\omega \, \psi_i^2 \sqrt{g}   \nonumber \\  \leq \frac{4}{9} \int_{R}^{2^i R} \int_{S^2} dr \, d\omega \, \left(\partial_r\psi_i\right)^2\sqrt{g} g^{rr} + \| u_n \|^2_{H_{AdS}^{1,0}  \left(\Sigma_0 \cap \{ r \geq 2^{i-1}R \} \right)} \, ,
\end{align}
where the last term enters from estimating the boundary term $\int_{S^{2}_{t,r_i}} r_i^3 \psi_i^2 d\omega$ which arises in the integration by parts: We use that $\psi_i$ is equal to the trace of the function $u_n$ on $\mathcal{B}_i$ by the boundary condition imposed.
\end{proof}

Next we commute with the vectorfield $\partial_t$. We note (cf.~the appendix of \cite{Mihalisnotes})
\begin{lemma} \label{coml}
Let $\psi$ be a solution of the equation $\Box_g \psi = f$ and $X$ be a vectorfield. Then
\begin{align}
\Box_g \left(X\psi\right) &= X\left(f\right) + \mathcal{C} \left[X\psi\right] \nonumber \\
\mathcal{C} \left[X\psi\right] &= -2 {}^{(X)}\pi^{\alpha \beta} \nabla_\alpha \nabla_\beta \psi - 2 \left[2 \nabla^\alpha {}^{(X)}\pi_{\alpha \mu} - \nabla_\mu \left( tr {}^{(X)}\pi \right) \right] \nabla^\mu \psi \, . \nonumber
\end{align}
\end{lemma}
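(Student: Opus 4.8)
The plan is to prove the identity by computing the commutator $[\Box_g,X]\psi = \Box_g(X\psi) - X(\Box_g\psi)$ directly, since $\psi$ solves $\Box_g\psi = f$ and hence $X(\Box_g\psi)=X(f)$; the claimed formula is thus equivalent to $\mathcal{C}[X\psi] = [\Box_g,X]\psi$. As this is a purely tensorial (indeed algebraic) identity, I would work in an arbitrary coordinate frame with the Levi-Civita connection, using metric compatibility $\nabla g = 0$ throughout, so that the only nontrivial input is the Ricci identity for commuting covariant derivatives.

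First I would expand $\Box_g(X\psi) = g^{ab}\nabla_a\nabla_b\left(X^c\nabla_c\psi\right)$ by the Leibniz rule and sort the output according to how many derivatives fall on $\psi$. Using the symmetry of $g^{ab}$, the two terms carrying one derivative of $X$ coincide, and one is left with three groups: a one-derivative piece $\left(\Box_g X^c\right)\nabla_c\psi$, a two-derivative piece $2\left(\nabla^a X^b\right)\nabla_a\nabla_b\psi$, and a principal three-derivative piece $X^c g^{ab}\nabla_a\nabla_b\nabla_c\psi$. In the two-derivative piece, $\nabla_a\nabla_b\psi$ is symmetric (as $\psi$ is a scalar and the connection torsion-free), so only the symmetrization of $\nabla^a X^b$ survives, which is exactly ${}^{(X)}\pi^{ab}$; this yields the second-order term of $\mathcal{C}[X\psi]$.

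Next I would treat the principal term by subtracting $X(\Box_g\psi)=X^c g^{ab}\nabla_c\nabla_a\nabla_b\psi$. Writing $\omega_b=\nabla_b\psi$ and relabelling via the symmetry of $\nabla_a\nabla_b\psi$, the difference reduces to a contracted commutator of two covariant derivatives on the one-form $\omega$, which by the Ricci identity produces a Ricci contraction of the schematic form $R^d{}_c X^c\nabla_d\psi$. To eliminate this unwanted curvature term I would rewrite the one-derivative piece by applying the commutation identity once more to $\nabla^b\nabla_a X_b$, obtaining the standard relation $\Box_g X_a = 2\nabla^b\,{}^{(X)}\pi_{ab} - \nabla_a\!\left(\mathrm{tr}\,{}^{(X)}\pi\right) - R^b{}_a X_b$. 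Substituting this, the two Ricci contributions cancel by the symmetry of the Ricci tensor, leaving precisely the first-order term $\left[2\nabla^b\,{}^{(X)}\pi_{ab} - \nabla_a(\mathrm{tr}\,{}^{(X)}\pi)\right]\nabla^a\psi$ of $\mathcal{C}[X\psi]$.

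The only genuine obstacle is the bookkeeping: one must keep all index placements and the Riemann/Ricci sign conventions consistent so that the two curvature terms cancel exactly, and then match the overall normalization (including the signature convention for $\Box_g$ and the sign fixed in the definition of ${}^{(X)}\pi$ in section~\ref{approK}) to the form stated. No analytic estimates enter — this is why the result can simply be quoted from the appendix of \cite{Mihalisnotes} — but the cancellation of the Ricci terms is the step that must be checked with care.
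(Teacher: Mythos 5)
Your proposal is correct and is precisely the standard computation that the paper itself does not reproduce but delegates to the appendix of \cite{Mihalisnotes}: Leibniz expansion of $\Box_g(X^c\nabla_c\psi)$, symmetrization of $\nabla^a X^b$ onto ${}^{(X)}\pi^{ab}$ against the symmetric Hessian, and cancellation of the two Ricci contractions (one from commuting three derivatives of $\psi$, one from rewriting $\Box_g X_a$ via the contracted Ricci identity). One caveat worth recording: carrying out your computation with the convention $2\,{}^{(X)}\pi^{ab}=\nabla^a X^b+\nabla^b X^a$ fixed in section \ref{approK} yields $\mathcal{C}[X\psi]=+2\,{}^{(X)}\pi^{\alpha\beta}\nabla_\alpha\nabla_\beta\psi+\bigl[2\nabla^\alpha\,{}^{(X)}\pi_{\alpha\mu}-\nabla_\mu(\mathrm{tr}\,{}^{(X)}\pi)\bigr]\nabla^\mu\psi$, so the overall sign and the factor $2$ on the first-order term in the lemma as printed do not come out as stated; this is a normalization discrepancy in the statement rather than a gap in your argument, and it is immaterial to the paper, which uses only the schematic structure and decay of $\mathcal{C}$.
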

Applying the Lemma with $X=\partial_t$ and using the decay assumptions on the deformation tensor and its derivatives, we obtain the analogue of (\ref{bases}): 
\begin{align} \label{imes}
 \left[\|\partial_t \psi_i\|^2_{H_{AdS}^{1,0}\left(\Sigma_{t}^i \right)} + \int_{\Sigma_{t}^i} \left(\partial_t \partial_t \psi_i\right)^2 dr d\omega \right] \nonumber \\ \leq \frac{C}{R}  \int_0^t d\tau \left[ \|\partial_t \psi_i\|^2_{H_{AdS}^{1,0}\left(\Sigma_{\tau}^i \right)} + \int_{\Sigma_{\tau}^i} \left(\partial_t \partial_t  \psi_i \right)^2 dr d\omega + \boxed{\|\psi_i\|^2_{H_{AdS}^{2,0}\left(\Sigma_{\tau}^i \right)}}\right] \nonumber \\ +  C \left[\|v_n\|^2_{H_{AdS}^{1,0}\left(\Sigma_{0}^i \right)} +   \|w_n\|^2_{H_{AdS}^{0,-2}\left(\Sigma_{0}^i \right)} \right]\, .
\end{align}
The boxed term arises from the commutation error-term of Lemma \ref{coml}. The entire second line vanishes if $\partial_t$ is Killing.

Observe that this estimate \emph{does not} involve the last term in (\ref{bases}). This is because $\partial_t\psi_i$ vanishes in the trace sense on both boundaries $\mathcal{B}_0, \mathcal{B}_i$ and hence the boundary terms in the Hardy inequality for $\partial_t\psi_i$ all vanish.

Note also that the $L^2$-norm of $\partial_t\psi$ on the left hand side now has a stronger $r$-weight than in (\ref{bases}): The latter only controls $\int \left(\partial_t \psi\right)^2 dr d\omega$ while the right hand side of (\ref{imes}) controls $\int \left(\partial_t \psi\right)^2 r^2 dr d\omega$.

The missing derivatives on the left of (\ref{imes}) are obtained via elliptic estimates for the solution. We write the wave equation $\Box_g \psi + \frac{\alpha}{l^2} \psi = 0$ as
\begin{align}
L \psi_i = \mathcal{F}_t \left[\psi_i\right] \, ,
\end{align}
where the operator $L$ was defined in (\ref{Ldef}) and
\begin{align}
\mathcal{F}_t \left[\psi_i\right] = - \frac{1}{\sqrt{g}} \partial_t \left(g^{tt} \sqrt{g} \right) \partial_t \psi_i -  g^{tt}  \cdot  \partial_t \partial_t \psi_i - 2g^{tj} \partial_j \partial_t \psi_i - \frac{1}{\sqrt{g}} \partial_j \left(g^{tj} \sqrt{g}\right) \partial_t \psi_i \, .\nonumber 
\end{align}
Using the asymptotic behaviour of the metric coefficients, we obtain that 
\begin{eqnarray} \label{rhsest}
\int_{\Sigma_{t}^i} |\mathcal{F}_t\left[\psi_i\right]|^2 r^2 r^s dr d\omega \leq C\left[ \|\partial_t \psi_i\|^2_{H_{AdS}^{1,0}\left(\Sigma_{t}^i \right)} + \int_{\Sigma_{t}^i} \left(\partial_t \partial_t \psi_i\right)^2 dr d\omega \right]
\end{eqnarray}
holds for any $s \leq 2$. Hence from Proposition \ref{trela}  we conclude (taking into account that $\psi_i$ is equal to the trace of the function $u_n$ on the boundary)
\begin{align} \label{impr1}
\| \psi_i\|^2_{H_{AdS}^{2,s}\left(\Sigma_{t}^i \right)} \leq C_s \left[ \|\partial_t \psi_i\|^2_{H_{AdS}^{1,0}\left(\Sigma_{t}^i \right)} + \int_{\Sigma_{t}^i} \left(\partial_t \partial_t \psi_i\right)^2 dr d\omega \right] \nonumber \\ 
+ C \| u_n \|_{H^{2,s}_{AdS} \left(\Sigma_0 \cap \{ r \geq 2^{i-1}R \} \right)}
 \end{align}
 for any $s<\min\left(2,\sqrt{9-4\alpha}\right)$, where the constant may blow up as $s$ approaches this value. For the last term we can invoke the estimate (\ref{dataes}). Finally, adding the estimates (\ref{impr1}), (\ref{imes}) and (\ref{bases}) we obtain the first estimate of Proposition \ref{mnthm} after applying Gronwall's inequality.
 The second estimate of Proposition \ref{mnthm} follows by commuting the wave equation once more with $\partial_t$. This will produce (cf.~Lemma \ref{coml})
 \begin{align}
 \Box_g \left(\partial_t \partial_t \psi_i \right) + \frac{\alpha}{l^2} \left(\partial_t \partial_t \psi_i \right) = \partial_t  \mathcal{C} \left[\partial_t \psi_i\right] +  \mathcal{C} \left[\partial_t \partial_t \psi_i\right]  \, ,
 \end{align} 
 with the right hand side vanishing in the case that $\partial_t$ is Killing. In general we have the error-estimate
 \begin{align}
 \int_{\mathcal{D}_i} \left(| \partial_t  \mathcal{C} \left[\partial_t \psi_i\right] |^2 +  | \mathcal{C} \left[\partial_t \partial_t \psi_i\right]  |^2 \right) \sqrt{g} r^s dt dr d\omega \leq \nonumber \\
 \frac{C}{R} \int_0^t d\tau \left[ \|\partial_t \partial_t  \psi_i\|^2_{H_{AdS}^{1,0}\left(\Sigma_{\tau}^i \right)} + \|\partial_t \partial_t \partial_t \psi_i\|^2_{H_{AdS}^{0,-2}\left(\Sigma_{\tau}^i \right)}  + \| \partial_t   \psi_i\|^2_{H_{AdS}^{2,0}\left(\Sigma_{\tau}^i \right)}\right] 
 \end{align}
as a consequence of the asymptotic decay of the metric and the fact that any term containing three derivatives must have at least one $\partial_t$-derivative in it. With this we obtain the analogue of (\ref{imes}),
\begin{align} \label{imes2}
 \left[\|\partial_t \partial_t\psi_i\|^2_{H_{AdS}^{1,0}\left(\Sigma_{t}^i \right)} + \int_{\Sigma_{t}^i} \left(\partial_t \partial_t \partial_t\psi_i\right)^2 dr d\omega \right] \nonumber \\ \leq \frac{C}{R}  \int_0^t d\tau \left[ \|\partial_t \partial_t\psi_i\|^2_{H_{AdS}^{1,0}\left(\Sigma_{\tau}^i \right)} + \|\partial_t \partial_t \partial_t\psi_i\|^2_{H_{AdS}^{0,-2}\left(\Sigma_{\tau}^i \right)} +\|\psi_i\|^2_{H_{AdS}^{2,0}\left(\Sigma_{\tau}^i \right)}\right] + \nonumber \\  C \left[\|w_n\|^2_{H_{AdS}^{1,0}\left(\Sigma_{0}^i \right)} +   \|r^2 L v_n \|^2_{H_{AdS}^{0,-2}\left(\Sigma_{0}^i \right)} + \|u_n\|_{H_{AdS}^{2,0}\left(\Sigma_{0}^i \right)} + \|v_n\|_{H_{AdS}^{1,0}\left(\Sigma_{0}^i \right)}  \right] ,
\end{align}
where all terms on the right hand side except the first two in the third line would vanish if $\partial_t$ is Killing. The term $\|u_n\|_{H_{AdS}^{2,0}\left(\Sigma_{0}^i \right)}$ may be dropped in view of (\ref{dataes}). Turning to the elliptic estimate, this time derived from 
 \begin{align}
 L \left(\partial_t \psi_i\right) = \mathcal{F}_t \left[\partial_t \psi_i\right] - \mathcal{C} \left[\partial_t \psi_i\right] 
 \end{align}
 we produce the analogue of (\ref{impr1}), 
 \begin{align} \label{impr2}
\| \partial_t \psi_i\|^2_{H_{AdS}^{2,s}\left(\Sigma_{t}^i \right)} \leq C_s \left[ \|\partial_t \partial_t \psi_i\|^2_{H_{AdS}^{1,0}\left(\Sigma_{t}^i \right)} +  \| \partial_t \partial_t  \partial_t  \psi_i\|^2_{H_{AdS}^{0,-2}\left(\Sigma_{t}^i \right)}  + \boxed{\| \psi_i\|^2_{H_{AdS}^{2,0}\left(\Sigma_{\tau}^i \right)}} \right]
\end{align}
Note that the last term in (\ref{impr1}) is absent for (\ref{impr2}). This is because $\partial_t \psi_i$ vanishes on $\mathcal{B}_i$ (whereas $\psi_i$ did not). Combining (\ref{imes2}) and (\ref{impr2}) one arrives at (\ref{hios}) after applying Gronwall's inequality.
 
It remains to derive estimates for the difference of two solutions. For this, we will first establish (\ref{difepre}). Another commutation and the same argument repeated will eventually yield (\ref{dife}).

For any $k>0$, the difference of two solutions arising from problem $\mathcal{P}_i$ and $\mathcal{P}_{i+k}$ respectively satisfies
\begin{equation}
\Box_g \tilde{\psi}_{i,k} + \frac{\alpha}{l^2} \tilde{\psi}_{i,k} = 0 \textrm{ \ \ \ with \ \ \ $\tilde{\psi}_{i,k} = \psi_i - \psi_{i+k}$}
\end{equation} 
in $\mathcal{D}_{i}$. The initial data is vanishing, as is the boundary data on $\mathcal{B}_0$. The boundary data on $\mathcal{B}_{i}$, on the other hand, is equal to $u_n |_{\mathcal{B}_i}$ minus the trace of the solution $\mathcal{P}_{i+k}$ induced on that boundary. Hence we will obtain the energy estimate
\begin{eqnarray} \label{aux}
\|  \tilde{\psi}_{i,k} \|^2_{H_{AdS}^{1,0}\left(\Sigma^i_t\right)} + \| \partial_t  \tilde{\psi}_{i,k}  \|^2_{H_{AdS}^{0,-2}\left(\Sigma^i_t\right)} \nonumber \\ 
\leq \frac{C}{R} \int_{\mathcal{D}_i} dt \left[\|  \tilde{\psi}_{i,k} \|^2_{H_{AdS}^{1,0}\left(\Sigma^i_t\right)} + \| \partial_t   \tilde{\psi}_{i,k} \|^2_{H_{AdS}^{0,-2}\left(\Sigma^i_t\right)}\right] + C \Big|\int_{\mathcal{B}_{i}} \mathbb{T} \left[  \tilde{\psi}_{i,k}\right]\left(\partial_t, n_{\mathcal{B}_i}\right) \Big| \, .
\end{eqnarray}
For the boundary term in (\ref{aux}) we will use that every $\psi_i$ is in $H_{AdS}^{2,s}\left(\Sigma_{t}^i\right)$ and $\partial_t \psi_i$ is in $H_{AdS}^{1,0}\left(\Sigma_{t}^i\right)$: Let $\tilde{\chi}_{i-1}\left(r\right)=1-\chi_{i-1}\left(r\right)$ (recall $\chi_i\left(r\right)$ defined in the proof of Lemma \ref{dens}), which is $1$ on $\mathcal{B}_{i}$ and $0$ on $\mathcal{B}_{i-1}$ and satisfies $|\tilde{\chi}_{i-1}^\prime \left(r\right)| \leq \frac{C}{r}$. Applying the energy identity (\ref{emtd}) in the region $\mathcal{D}_{i} \setminus \mathcal{D}_{i-1}$ with the vectorfield $X=\tilde{\chi}_{i-1} \cdot \partial_t$ leads to
\begin{eqnarray} \label{bndtermcomp}
 \int_{\mathcal{B}_{i}} \partial_t \left(\tilde{\psi}_{i,k}\right) n_{\mathcal{B}_i} \left(\tilde{\psi}_{i,k}\right)  = \int_{\Sigma^i_t} \tilde{\chi}_{i-1} \,  \mathbb{T}_{\mu t}\left[\tilde{\psi}_{i,k}\right] n^\mu  - \int_{\Sigma_0^i} \tilde{\chi}_{i-1} \,  \mathbb{T}_{\mu t}\left[\tilde{\psi}_{i,k}\right] n^\mu \nonumber \\  - \int_{\mathcal{D}_{i} \setminus \mathcal{D}_{i-1}} \tilde{\chi}_{i-1}\phantom{}^{(T)}\pi \cdot  \mathbb{T}\left[\tilde{\psi}_{i,k}\right] +  \int_{\mathcal{D}_{i} \setminus \mathcal{D}_{i-1}} \left(\tilde{\chi}_{i-1}\right)_{,r} g^{r\mu} \, \mathbb{T}_{\mu t}\left[\tilde{\psi}_{i,k}\right] .
 \end{eqnarray}
 From  this we derive the estimate
\begin{align} \label{baux}
\Big| \int_{\mathcal{B}_{i}} \partial_t \left(\tilde{\psi}_{i,k}\right) n_{\mathcal{B}_i} \left(\tilde{\psi}_{i,k}\right)\Big|   \leq C\left(2^{i-1}R\right)^{-\frac{s}{2}} \left(T+1\right) \sup_{t \in [0,T]}\Bigg[\| \psi_i \|^2_{H_{AdS}^{2,s}\left(\Sigma^i_t\right)}  \nonumber \\ + \| \partial_t \psi_{i} \|^2_{H_{AdS}^{1,0}\left(\Sigma^i_t\right)} + \| \psi_{i+k} \|^2_{H_{AdS}^{2,s}\left(\Sigma^i_t\right)} + \| \partial_t \psi_{i+k} \|^2_{H_{AdS}^{1,0}\left(\Sigma^{i}_t\right)} \Bigg] \, .
\end{align}
Note that the right hand side goes to zero for $i\rightarrow \infty$, 
since the square bracket is uniformly bounded from initial data by the first estimate of Proposition \ref{mnthm}.
The only difficult term to control in order to obtain the previous estimate is the last term in (\ref{bndtermcomp}), with the contraction taken in $r$. It is estimated
\begin{align} \label{hc}
 \Big|\int_{\mathcal{D}_{i} \setminus \mathcal{D}_{i-1}} \left(\tilde{\chi}_{i-1}\right)_{,r} r^4 \left(\partial_t \tilde{\psi}_{i,k}\right) \left(\partial_r \tilde{\psi}_{i,k} \right) dt dr d\omega \Big| 
 \nonumber \\ 
 \leq C \int_{\mathcal{D}_{i} \setminus \mathcal{D}_{i-1}} r^3 \Big| \partial_t \tilde{\psi}_{i,k} \Big|\Big|(\partial_r \tilde{\psi}_{i,k}\Big| dt dr d\omega \nonumber \\  \leq C \cdot T \cdot r^{-\frac{s}{2}}_{i-1}  \sup_{t \in [0,T]}\Bigg[\| \psi_i \|^2_{H_{AdS}^{2,s}\left(\Sigma^i_t\right)} + \| \partial_t \psi_{i} \|^2_{H_{AdS}^{1,0}\left(\Sigma^i_t\right)} \nonumber \\ + \| \psi_{i+k} \|^2_{H_{AdS}^{2,s}\left(\Sigma^i_t\right)} + \| \partial_t \psi_{i+k} \|^2_{H_{AdS}^{1,0}\left(\Sigma^{i}_t\right)} \Bigg]   \, .
\end{align}

\begin{remark}
Note that, while being of the same order of derivatives, the term in the second line of (\ref{hc}) cannot be controlled by the energy norm we are estimating on the left hand side of (\ref{aux}) as the $r$-weight of that term is too strong. Indeed, the first order energy only controls $\int \left(\partial_t \tilde{\psi}_{i,k}\right)^2 dr \, d\omega$, whereas the $\partial_t$-commuted energy controls in particular $\int \left(\partial_t \tilde{\psi}_{i,k}\right)^2 r^2 dr \, d\omega$ as a zeroth order term, which is what is necessary to control that term. Similarly, the first order energy only controls $\int \left(\partial_r \tilde{\psi}_{i,k}\right)^2 r^4 dr \, d\omega$, while the improved elliptic estimate arising from the $\partial_t$-commuted energy controls $\int \left(\partial_r \tilde{\psi}_{i,k}\right)^2 r^{4+s} dr \, d\omega$. This additional gain in $r$-decay provides the smallness factor of $r^{-\frac{s}{2}}$ in the estimate and ensures convergence of the sequence of solutions $\psi_i$.
\end{remark}
Combining (\ref{aux}) and (\ref{baux}) yields after inserting (\ref{bios}) and applying Gronwall's inequality
\begin{align}  \label{difepre}
\| \tilde{\psi}_{i,k} \|^2_{H^{1,0}_{AdS} \left(\Sigma_{t}^i\right)}  + \| \partial_t \tilde{\psi}_{i,k}  \|^2_{H^{0,-2}_{AdS} \left(\Sigma_{t}^i\right)} \nonumber \\
\leq e^{C_sT} \cdot T \left(2^{i-1}R\right)^{-\frac{s}{2}} \Big[
\|v_n\|^2_{H^{1,0}_{AdS} \left(\Sigma_0\right)} + \|w_n\|^2_{H^{0,-2}_{AdS} \left(\Sigma_0\right)}  \Big] \, .
\end{align}
Finally, for (\ref{dife}) we start from
\begin{align}
\Box_g \left(\partial_t \tilde{\psi}_{i,k} \right) + \frac{\alpha}{l^2}  \left(\partial_t  \tilde{\psi}_{i,k} \right) = \mathcal{C}\left[\partial_t  \tilde{\psi}_{i,k}\right] \nonumber \, 
\end{align} 
to derive the analogue of (\ref{aux}). Because of the inhomogeneity (which vanishes if $\partial_t$ is Killing in $\mathcal{D}$), it now reads
\begin{align} \label{aux2}
\| \partial_t \tilde{\psi}_{i,k} \|^2_{H_{AdS}^{1,0}\left(\Sigma^i_t\right)} + \| \partial_t \partial_t \tilde{\psi}_{i,k}  \|^2_{H_{AdS}^{0,-2}\left(\Sigma^i_t\right)} \nonumber \\ 
\leq \frac{C}{R} \int_{\mathcal{D}_i} dt \Bigg[\| \partial_t \tilde{\psi}_{i,k}  \|^2_{H_{AdS}^{1,0}\left(\Sigma^i_t\right)} + \| \partial_t \partial_t \tilde{\psi}_{i,k}  \|^2_{H_{AdS}^{0,-2}\left(\Sigma^i_t\right)}  + \boxed{\| \tilde{\psi}_{i,k}  \|_{H^{2,0}_{AdS} \left(\Sigma_\tau^i\right)}} \Bigg] \nonumber \\ + C \Big|\int_{\mathcal{B}_{i}} \mathbb{T} \left[\partial_t \tilde{\psi}_{i,k}  \right]\left(\partial_t, n_{\mathcal{B}_i}\right) \Big| \, .
\end{align}
with the boxed term taking into account the inhomogeneity and the entire second line vanishing if $\partial_t$ is Killing. For the boundary term on $\mathcal{B}_i$ we are going to use exactly the same argument as above, now producing the estimate
\begin{align} \label{bupd}
\Big|\int_{\mathcal{B}_{i}} \mathbb{T} \left[\partial_t \tilde{\psi}_{i,k}\right]\left(\partial_t, n_{\mathcal{B}_i}\right) \Big| \leq C \cdot \left(T+1\right) \cdot \left(2^{i-1}R\right)^{-\frac{s}{2}}  \sup_{t \in [0,T]}\Bigg[\| \partial_t\psi_i \|^2_{H_{AdS}^{2,s}\left(\Sigma^i_t\right)} \nonumber \\  + \| \partial_t \partial_t \psi_{i} \|^2_{H_{AdS}^{1,0}\left(\Sigma^i_t\right)} + \| \partial_t \psi_{i+k} \|^2_{H_{AdS}^{2,s}\left(\Sigma^i_t\right)} + \| \partial_t \partial_t \psi_{i+k} \|^2_{H_{AdS}^{1,0}\left(\Sigma^{i}_t\right)} \Bigg] \, .
\end{align}
Next we write the wave equation for $\tilde{\psi}_{i,k}$ in elliptic form on each $\Sigma_{t}^i$. We have
\begin{align}
L \tilde{\psi}_{i,k} = \mathcal{F}_t \left[\tilde{\psi}_{i,k} \right] \textrm{ \ \ \ in $\Sigma_\tau^i$}
\end{align}
and $\tilde{\psi}_i = 0$ on $\mathcal{B}_0$, while $\tilde{\psi}_i$ is equal to $u_n|_{\mathcal{B}_i}$ minus the trace of $\psi_{i+k}$ on $\mathcal{B}_i$. Hence applying Proposition \ref{trela} yields
\begin{align} \label{elf}
\|\tilde{\psi}_{i,k} \|^2_{H^{2,s}_{AdS} \left(\Sigma_\tau^i\right)} \leq C_s \left[ \| \partial_t \tilde{\psi}_{i,k}  \|^2_{H^{1,0}_{AdS} \left(\Sigma_{t}^i\right)} + \| \partial_t \partial_t \tilde{\psi}_{i,k}  \|^2_{H^{0,-2}_{AdS} \left(\Sigma_{t}^i\right)}\right] \nonumber \\
 + C \|u_n\|^2_{H^2_{AdS} \left(\Sigma_0 \cap \{ r \geq 2^{i-1}R \} \right) } + C \| \psi_{i+k}\|^2_{H^{2,s}_{AdS} \left(\Sigma^{i+k}_t \cap \{ r \geq 2^{i-1}R \}\right) }\, . 
\end{align}
We note that the last term can be estimated using the uniform estimate (\ref{bios}),
\begin{align}
 \| \psi_{i+k}\|^2_{H^{2,s}_{AdS} \left(\Sigma^{i+k}_t \cap \{ r \geq 2^{i-1}R \}\right) } \leq  \left(2^{i-1}R\right)^{s-\tilde{s}} \| \psi_{i+k}\|^2_{H^{2,\tilde{s}}_{AdS} \left(\Sigma^{i+k}_t\right) } \nonumber \\ \leq C \left(2^{i-1}R\right)^{s-\tilde{s}} e^{C_{\tilde{s}} t} \left(\|v_n\|^2_{H^{1,0}_{AdS} \left(\Sigma_0\right)} + \|w_n\|^2_{H^{0,-2}_{AdS} \left(\Sigma_0\right)}\right)  \, ,
\end{align}
for any $\tilde{s}$ with $s<\tilde{s}<\sqrt{9-4\alpha}$. A similar estimate holds from (\ref{dataes}) for the penultimate term in (\ref{elf}).
We summarize by combining (\ref{elf}) with (\ref{aux2}), (\ref{bupd}) and inserting the uniform estimate (\ref{hios}) for the square bracket in (\ref{bupd}) to arrive at the inequality
\begin{align} \label{gw}
f\left(t\right) \leq C_s \int_0^t f\left(z\right) dz + \textrm{right hand side of (\ref{dife})} \, 
\end{align}
for
\begin{align}
f \left(t\right) = \|\tilde{\psi}_{i,k} \|^2_{H^{2,s}_{AdS} \left(\Sigma_i\right)} + \| \partial_t \tilde{\psi}_{i,k} \|^2_{H_{AdS}^{1,0}\left(\Sigma^i_t\right)} + \| \partial_t \partial_t \tilde{\psi}_{i,k}  \|^2_{H_{AdS}^{0,-2}\left(\Sigma^i_t\right)} \, .
\end{align}
Applying Gronwall's inequality to (\ref{gw}) yields (\ref{dife}).
\subsection{Proof of Theorem \ref{mtounique}}
Suppose we have two solutions of $\mathcal{P}$, $\psi_1$ and $\psi_2$ living in the spaces of Theorem \ref{mto}. Consider the wave equation for the difference, $\tilde{\psi}=\psi_1 - \psi_2$, and apply the standard energy estimate in $\mathcal{D}_i$. Redoing the computation that lead to (\ref{difepre}) we find
\begin{align}
\| \tilde{\psi} \|_{H^{1,0}_{AdS} \left(\Sigma_\tau^i\right)} \leq \frac{C}{R} \int dt \| \tilde{\psi} \|^2_{H^{1,0}_{AdS} \left(\Sigma_\tau^i\right)} + C \sum_{k=1,2} \| \psi_k \|^2_{H^{2,0}_{AdS} \left(\Sigma_\tau \{ r \geq 2^{i-1}R \} \right)} \, ,
\end{align}
Since both $\psi_1,\psi_2$ are in $H_{AdS}^{2,0}$, the last term goes to zero as $i \rightarrow \infty$ establishing that $\tilde{\psi}=0$ up to some time $t^\star >0$ depending only on the background. One can then reiterate the argument finitely many times to establish uniqueness up to $T$.

\subsection{Uniqueness vs Non-Uniqueness} \label{uniqueness}
As for the elliptic problem, uniqueness only holds in the class of $H^{2,0}_{AdS}$ solutions, cf.~Remark \ref{unelliptic}. We may paraphrase this by saying that there is only one solution in the energy class arising from the approximate Killing field $\partial_t$, as this naturally corresponds to the space $H^{2,0}_{AdS}$. 

This opens the possibility of solutions exhibiting weaker decay at infinity but nevertheless satisfying the Dirichlet condition $\psi=0$ at $\mathcal{I}$. Let us finally show that such alternative solutions do indeed exist and consider the case $\alpha=2$ as an example.\footnote{The following illustrative example was suggested to me by an anonymous referee of \cite{HolzegelAdS}.}

\begin{example} \label{exmp}
We let $\alpha=2$ and consider pure AdS as a background with the additional assumption of spherical symmetry on $\psi$. The coordinate transformation $r = l \frac{\cos x}{\sin x}$ takes null-infinity to $x=0$ and the origin $r=0$ to $\frac{\pi}{2}$. Substituting $\tilde{\psi} = r \psi$ the equation $\Box_g \psi + 2\psi=0$ reduces to the study of the two dimensional wave equation
\begin{equation}
\partial_t^2 \tilde{\psi} = \partial_x^2 \tilde{\psi} \, ,
\end{equation}
with data given on the interval $[\frac{\pi}{2},0]$. Let us for the moment forget about the boundary condition and instead extend the solution to negative values of $x$. Solving the wave equation with arbitrary data on $[-\frac{\pi}{2},\frac{\pi}{2}]$ at $t=0$, say, the solution will in general have a non-vanishing trace on the set $\{t,x=0\}$. If we extend the solution as an odd function, however, then because the symmetry is preserved in evolution, $\psi$ will also vanish on $x=0$ for all times. Extending the data in two different ways we have obtained two different solutions $\tilde{\psi}_1$, $\tilde{\psi}_2$. However, since $\psi = \frac{\tilde{\psi}}{r}$, both of the associated solutions $\psi_1$ and $\psi_2$ satisfy the original Dirichlet boundary initial value problem. In fact we see that one of the solutions decays like $\frac{1}{r^2}$ (corresponding to the one of Theorem \ref{mnthm}), while the other decays only like $\frac{1}{r}$ near infinity (and has hence infinite $\partial_t$-energy flux through $\mathcal{I}$).
\end{example}

\subsection{Improved estimates}
Theorem \ref{mto} provides us with a unique solution in a weighted $H^2$-space. Clearly, using further commutations one can establish regularity in weighted $H^k$-spaces for arbitrary high $k$, provided this regularity is assumed initially. What is perhaps not quite as immediate is that for the range $2\leq s<\sqrt{9-4\alpha}$ higher regularity implies an improvement of the $r$-weight in the $H_{AdS}^{2,s}$-norm for $\psi$. This is seen as follows. Assume $2\leq s<\sqrt{9-4\alpha}$. Applying Theorem \ref{mto} to the initial boundary value problem for $\Box_g \left(\partial_t \psi\right) = \mathcal{C}\left[\partial_t \psi\right]$ yields\footnote{Strictly speaking, Theorem \ref{mto} only applies to the homogenous problem. However, the right hand side is easily dealt with repeating the estimates carried out in section \ref{theproof}.} in particular that $\partial_t \partial_t \psi \in H_{AdS}^{1,0} \left(\Sigma_\tau\right)$. This improves the estimate for the right hand side in the old elliptic estimate for $\psi$ derived from $L\psi = \mathcal{F}_t \left[\psi\right]$. Indeed, now $\mathcal{F}_t \left[\psi\right] \in H_{AdS}^{0,4}\left(\Sigma_\tau\right)$ (whereas before we only had $\partial_t \partial_t \psi \in H_{AdS}^{0,-2} \left(\Sigma_\tau\right)$ which gave $\mathcal{F}_t \left[\psi\right] \in H_{AdS}^{0,2}\left(\Sigma_\tau\right)$, cf.~(\ref{rhsest})). This allows us to apply Proposition \ref{sela} with the full $s<\sqrt{9-4\alpha}$ instead of $s<\min \left(2,\sqrt{9-4\alpha}\right)$ and establish that $\psi \in H_{AdS}^{2,s} \left(\Sigma_\tau\right)$ for the full range $s<\sqrt{9-4\alpha}$. Using Sobolev embedding, we derive the pointwise decay bound $\|\psi\| \leq C_s \cdot r^{-\frac{3}{2}-\frac{1}{2}s}$ for any $s<\sqrt{9-4\alpha}$. In this way one retrieves the class of solutions considered in \cite{HolzegelAdS}, Definition 3.1. 
\section{Applications} \label{finsec}

\subsection{Globalizing the result} \label{gloco}
Theorem \ref{mto} can be combined with the usual local well-posedness result for the wave equation to produce a \emph{spatially global} well-posedness statement for a suitable class of spacetimes.\footnote{In view of the linearity of the wave equation global existence in time is immediate.}

Assume, for instance, that $\left(\mathcal{M},g_{M,l,a}\right)$ is a Kerr-AdS spacetime with 
spacelike slice $\Sigma_0$, on which data $\left(\psi_0, \psi_0^\prime\right)$ for the massive massive equation are defined (with $\psi_0$ and $\psi^\prime_0$ being constructed as in section \ref{idata}). We first solve the wave equation in the domain of dependence of $\Sigma_0$, which is standard. This induces (perhaps non-trivial) boundary data on the hypersurface $r=R\left(M,l,a\right)$, the $R$ being as large as required for the results of this paper to apply. Since Theorem \ref{mto} obviously also holds with non-trivial Dirichlet data imposed at $r=R$, we can apply it to produce a solution in the full $\mathcal{M} \cap J^+ \left(\Sigma_0\right)$. Alternatively, if one wants to apply Theorem \ref{mto} directly  (trivial boundary conditions at $r=R$), one can decompose any given data set into a set supported for $r < 3R$ and another supported for $r>2R$ only, such that their sum equals the original data.  For the latter set one applies Theorem \ref{mto}: By finite speed of propagation the solution is actually zero in a neighborhood of $r=R$ for small times and hence may trivially be extended by zero to the left. For the first set, the solution in the domain of dependence can  -- for small times -- be trivially extended to zero in a neighborhood of null-infinity. Adding the two solutions and using the linearity of the wave equation one produces the desired global solution for the given data in a small time interval.
\begin{corollary}
If $\left(\mathcal{M},g\right)$ is a pure AdS or Kerr-AdS spacetime, then the massive wave equation is globally well-posed in the energy space $H_{AdS}^{2,0}$ provided $\alpha<\frac{9}{4}$.
\end{corollary}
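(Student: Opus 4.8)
The plan is to obtain this Corollary as a direct specialization of the global well-posedness statement sketched in section \ref{gloco}, the only genuinely new input being the verification that the spacetimes in question fall within the scope of that statement. First I would check that pure AdS and Kerr-AdS each admit an asymptotically AdS spacetime patch in the sense of the Definition of section \ref{aAdS}. For pure AdS this is essentially immediate in static coordinates. For Kerr-AdS the Boyer--Lindquist metric does not satisfy the fall-off (II) directly (cf.~Remark \ref{expl}), so I would first pass to the coordinate system of \cite{Henneaux}, in which the metric components exhibit the required asymptotic behavior; one then reads off the decay of $g_{\mu\nu}$, $g^{\mu\nu}$ and their first derivatives and confirms that each entry satisfies the bounds listed in (II). With this in hand, fix $R = R(M,l,a)$ large enough that all the elliptic estimates of section \ref{elliptic} hold, so that $\mathcal{D} = \mathcal{M}\cap\{r\geq R\}$ is a legitimate asymptotically AdS patch to which Theorem \ref{mto} and Theorem \ref{mtounique} apply.

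Next I would treat the region $\{r \leq R\}$, which is globally hyperbolic: there the metric coefficients and all $r$-weights are bounded, so the initial value problem for $\Box_g\psi - \frac{\alpha}{l^2}\psi = 0$ is solved by standard energy methods in $J^+(\Sigma_0)\cap\{r\leq R\}$ and, in particular, induces (in general non-trivial) Dirichlet data on the timelike hypersurface $\{r=R\}$. Since near $r=R$ the weighted norms $H_{AdS}^{m,s}$ are equivalent to ordinary Sobolev norms, this trace lies in the correct regularity class for the boundary initial value problem of Theorem \ref{mto}, which -- as noted in section \ref{gloco} -- holds verbatim with non-trivial Dirichlet data imposed at $r=R$. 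Applying it produces the solution on $\mathcal{D}\cap J^+(\Sigma_0)$, and glueing the two pieces along $\{r=R\}$ yields a solution on all of $\mathcal{M}\cap J^+(\Sigma_0)$.

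Alternatively, to avoid tracking induced boundary data, I would use the decomposition of section \ref{gloco}: split the data into a piece supported in $\{r<3R\}$ and a piece supported in $\{r>2R\}$. For the inner piece one uses standard local well-posedness in the domain of dependence and extends by zero near $\mathcal{I}$; for the outer piece one applies Theorem \ref{mto} with trivial boundary conditions at $r=R$ and extends by zero inward. Finite speed of propagation guarantees that, for a short time, each solution genuinely vanishes in the overlap region, so that these extensions do solve the equation; adding them and invoking linearity produces a solution for the full data on a small time interval. Global existence in time then follows from linearity, membership in $H_{AdS}^{2,0}$ on each slice is exactly the output of Theorem \ref{mto}, and uniqueness in the energy class follows by combining the standard interior uniqueness with Theorem \ref{mtounique} in the patch, patched via the domain of dependence.

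The main obstacle I anticipate is the verification step for Kerr-AdS: one must exhibit the explicit change of coordinates bringing the Boyer--Lindquist form into compliance with (II) and carefully check the decay of every metric component, inverse component, and first derivative, including the off-diagonal entries $g_{tr}$, $g_{tA}$, $g_{rA}$ whose required fall-off ($\mathcal{O}(r^{-3})$, $\mathcal{O}(r^{-1})$, $\mathcal{O}(r^{-4})$ respectively) is stringent. By contrast, the matching along $\{r=R\}$ is comparatively routine because all weights are bounded there, and the evolution is controlled purely by linearity; so the geometric verification is where the real care is needed.
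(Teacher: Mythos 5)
Your proposal follows essentially the same route as the paper: verify that pure AdS and Kerr-AdS admit an asymptotically AdS patch (for Kerr-AdS via the coordinate change of Remark \ref{expl}), solve by standard theory in the domain of dependence of $\Sigma_0$, and then either apply Theorem \ref{mto} with the induced Dirichlet data at $r=R$ or use the $\{r<3R\}$/$\{r>2R\}$ decomposition with finite speed of propagation and linearity, exactly as in section \ref{gloco}. The added care you devote to checking the fall-off conditions for Kerr-AdS and to patching the uniqueness statement is a reasonable elaboration of details the paper leaves implicit, but it does not change the argument.
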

In fact, is is not hard to see that the result holds for any asymptotically AdS spacetime with interior global causal structure similar to one of the cases above.
\subsection{Spherical Symmetry}
The results proven here will be applied in a non-linear spherically-symmetric context in \cite{gs:lwp,gs:stab}. With that in view, let us examine the regularity assumptions on the metric required for Theorem \ref{mto}. Revisiting the proof carefully, we observe that the constant $C_s$ in the estimate of Theorem \ref{mto} (and in (\ref{bios}) respectively) depends only on \emph{second} derivatives of the metric
(while the constants $C,C_s$ in the higher order estimates (\ref{hios}) and (\ref{dife}) require three derivatives). More precisely, the metric should be $C^1$ \emph{and} we need to be able to estimate the error arising from one commutation with the vectorfield $\partial_t$, $\mathcal{C} \left[\partial_t \psi\right]$, which generically involves two derivatives of the metric. In the spherically symmetric application we have in mind, the metric will be written in null-coordinates $u$, $v$ as 
\begin{align} \label{sef}
g = -\Omega^2 \left(u,v\right) du dv + r^2\left(u,v\right) d\Sigma^2
\end{align}
with $r\left(u,v\right)$ being the area radius and appropriate asymptotic conditions on $\Omega$ and $r$ ensuring that the metric is asymptotically AdS. One shows that in this setting the components of the deformation tensor of the vectorfield $\partial_u + \partial_v$ (=$\partial_t$) are
\begin{align}
\pi^{uu} = \pi^{vv} = 0 \textrm{ \ \ \ , \ \ \ } \pi^{uv} = -\frac{2}{\Omega^3} \left(\Omega_u + \Omega_v\right) \textrm{ \ \ \ , \ \ \ } \pi^{AB} = \frac{1}{r} g^{AB} \left(r_u + r_v\right) \nonumber
\end{align}
Evaluating the expression  from Lemma \ref{coml} we obtain
\begin{align}
\mathcal{C} \left[\partial_u \psi + \partial_v \psi\right] = -4 {}^{(X)}\pi^{uv} \partial_u \partial_v \psi - 2 \left[2 \partial_u \pi^{uv} - g^{uv} \partial_u \left(2 g_{uv} \pi^{uv}  + g_{AB} \pi^{AB} \right) \right] \partial^v \psi \nonumber \\
- 2 \left[2 \partial_v \pi^{uv} - g^{uv} \partial_v \left(2 g_{uv} \pi^{uv}  + g_{AB} \pi^{AB} \right) \right] \partial^u \psi \nonumber \\
+ \textrm{terms involving one-derivative of $\psi$ and/or the metric} \nonumber
 \end{align}
In particular, we note that the expression does not involve second derivatives of $\Omega$ (but, of course, second derivatives of $r$). 
It follows that in the spherically symmetric case, for metrics of the form (\ref{sef}), Theorem \ref{mto} holds assuming that $r$ is $C^2$ and $\Omega$ is $C^1$. 

\section{Acknowledgements}
I would like to thank Jacques Smulevici and Claude Warnick for very helpful discussions and a careful reading of the manuscript. I am also grateful to Pedro Gir\~ao, Jos\'e Nat\'ario and Jorge Silva for their interest and detailed comments leading to several improvements of this manuscript. Finally, an anonymous referee is thanked for further insightful comments.
 
\bibliographystyle{utphys}
\bibliography{thesisrefs}
\end{document}